\newcommand{\version}{April 23, 2020}
\def\@tocline#1#2#3#4#5#6#7{\relax
  \ifnum #1>\c@tocdepth 
  \else
    \par \addpenalty\@secpenalty\addvspace{#2}%
    \begingroup \hyphenpenalty\@M
    \@ifempty{#4}{%
      \@tempdima\csname r@tocindent\number#1\endcsname\relax
    }{%
      \@tempdima#4\relax
    }%
    \parindent\z@ \leftskip#3\relax \advance\leftskip\@tempdima\relax
    \rightskip\@pnumwidth plus4em \parfillskip-\@pnumwidth
    #5\leavevmode\hskip-\@tempdima
      \ifcase #1
       \or\or \hskip 1em \or \hskip 2em \else \hskip 3em \fi%
      #6\nobreak\relax
      \dotfill
      \hbox to\@pnumwidth{\@tocpagenum{#7}}
    \par
    \nobreak
    \endgroup
  \fi}
\newcommand{\bdm}{\begin{displaymath}}
\newcommand{\edm}{\end{displaymath}}
\newcommand{\bdn}{\begin{eqnarray}}
\newcommand{\edn}{\end{eqnarray}}
\newcommand{\bay}{\begin{array}{c}}
\newcommand{\eay}{\end{array}}
\newcommand{\ben}{\begin{enumerate}}
\newcommand{\een}{\end{enumerate}}
\newcommand{\half}{\frac{1}{2}}
\newcommand{\N}{\mathbb{N}}
\newcommand{\R}{\mathbb{R}}
\newcommand{\eps}{\varepsilon}
\newcommand{\PsiLau}{\Psi_{\rm Lau}}
\newcommand{\cLau}{c _{\rm Lau}}
\newenvironment{rem}{\mbox{} \newline \noindent \textbf{Remark:}}{\vspace{0,3cm}}
\newtheorem{theorem}{Theorem}[section]
\newtheorem{lemma}[theorem]{Lemma}
\newtheorem{proposition}[theorem]{Proposition}
\theoremstyle{definition}
\theoremstyle{remark}
\newcommand{\beq}{\begin{equation}}
\newcommand{\eeq}{\end{equation}}
\numberwithin{equation}{section}
\newcommand{\as}{\mathrm{asym}}
\newcommand{\nLL}{n\mathrm{LL}}
\newcommand{\LLL}{\mathrm{LLL}}
\newcommand{\rv}{\mathbf{r}}
\newcommand{\qv}{\mathbf{q}}
\newcommand{\im}{\mathrm{i}}
\newcommand{\cL}{\mathcal{L}}
\begin{document}

\title[Holomorphic quantum Hall states in higher Landau levels]{Holomorphic quantum Hall states \\ in higher Landau levels}

\author{Nicolas Rougerie}
\address{Universit\'e Grenoble Alpes \& CNRS, LPMMC (UMR 5493), B.P. 166, F-38042 Grenoble, France}
\email{nicolas.rougerie@lpmmc.cnrs.fr}


\author{Jakob Yngvason}
\address{Faculty of Physics, University of Vienna, Boltzmanngasse 5, A-1090 Vienna, Austria}
\email{jakob.yngvason@univie.ac.at}



\date{\version}

\begin{abstract} 
Eigenstates of the planar magnetic Laplacian with homogeneous magnetic field form degenerate energy bands, the Landau levels. We discuss the unitary correspondence between states in higher Landau levels and those in the lowest Landau level, where wave functions are holomorphic. We apply this correspondence to many-body systems, in particular we represent  effective Hamiltonians and particle densities in higher Landau levels by corresponding quantities in the lowest Landau level.
\end{abstract}

\maketitle

\tableofcontents

\section{Introduction}

The state space of a charged particle moving in a homogeneous magnetic field in a plane orthogonal to the field decomposes into Landau levels, differing in energy by integral multiples of the magnetic field strength. When the position coordinates are expressed as complex numbers in the symmetric gauge the states in the lowest Landau level form a Bargmann space of holomorphic functions while wave functions in higher Landau  levels involve also powers
of the complex conjugate position variables in the standard representation. 

As noted by many people since long, and emphasized in  particular in \cite{Haldane-13,Haldane-18,CheBis-18}, a holomorphic representation of states  is not limited to the  lowest Landau level, where it has proved to be important for deriving some basic properties, e.g. \cite{LieRouYng-16,LieRouYng-17,OlgRou-19,Rougerie-xedp19,RouYng-17}. In fact, there is a natural unitary correspondence between states in different Landau levels, in particular between higher levels and the lowest one. 

In this expository paper we discuss several ways to arrive at the holomorphic representations  and derive explicit formulas for particle densities and effective Hamiltonians in higher Landau levels, expressed in terms of corresponding quantities in the lowest Landau level.  The methods have appeared in various disguises in the literature before but our aim is to  present them in a coherent fashion that, we hope, will be found useful for students and researchers in quantum Hall physics.

A physically appealing starting point is the decomposition of the position variables into guiding center variables and variables associated with the cyclotron motion of the particle around the guiding centers. While the components of the position operator commute, the other two sets of variables are non-commutative and satisfy canonical commutation relations. They can be represented in terms of creation and annihilation operators for two distinct and mutually commuting harmonic oscillators. One way of arriving at a holomorphic representation of states is an expansion in terms of coherent states for the harmonic oscillator of the guiding center variables~\cite{Haldane-18,CheBis-18}.  (These are the same as the \lq\lq vortex eigenstates\rq\rq in \cite{ChaFlo-07,ChaFloCan-08,ChaFlo-09}).  
The transformation between position coordinates and the coherent state variables can also  be expressed in terms of an integral operator with a kernel that is a modification of the reproducing kernel of a Bargmann space~\cite{ChaFlo-07}.

A formally simpler and more direct approach is to use the creation and annihilation operators of the cyclotron oscillator to define unitary mappings between different Landau levels\footnote{This approach appears already in~\cite{MacDonald-84} where it is attributed to Laughlin.}.
This gives explicit formulas for particle densities of many-body states in one Landau level in terms of polynomials in the Laplacian applied to corresponding densities in the lowest Landau level. The same formulas can alternatively be obtained by a Fourier transformation, exploiting the factorization of the exponential factor in the guiding center and cyclotron variables respectively. 
 
The main application of these considerations is in quantum Hall physics~\cite{Goerbig-09,Jain-07,Laughlin-99,StoTsuGos-99,Tong-16}. In this context, an electron gas is confined to two spatial dimensions and submitted to a magnetic field large enough to set the main energy scale. The quantization of the kinetic energy levels then becomes the salient feature. In the full plane, each level is infinitely degenerate, but for a finite area the degeneracy is proportional to the area times the field strength. For extremely large values of the latter, the lowest Landau level is degenerate enough to accommodate all electrons without violating the Pauli principle. 
For smaller values of the field several Landau levels can be completely filled with electrons and become inert in first approximation. The physics then boils down to the motion of the electrons in the last, partially filled, Landau level. 

In both cases only one Landau level has to be taken into account, and an effective model of widespread use in the literature is given in terms of a Hamiltonian acting on holomorphic functions. We review this first, before describing in more details the unitary mappings between Landau levels. The remarkable fact is that the dependence of the effective Hamiltonian on the Landau level it corresponds to is quite simple and transparent. An intuitive explanation (albeit not the most direct one from a  computational point of view) is that the  good variables to use are not the position variables but rather those of the guiding centers. The Landau level index, which fixes the energy of the cyclotron motion,  is encoded in a form factor in Fourier space that modifies external and interaction potentials via a differential operator.
In particular, the unitary mappings between Landau levels map multiplication by potentials to operators of the same kind.    
 
One salient feature of the effective operators acting on holomorphic functions is that they naturally suggest variational ans\"atze for their ground states, which become exact for certain truncated models. The Laughlin state~\cite{Laughlin-83,Laughlin-87} is the most emblematic of those, and much of our understanding of the fractional quantum Hall effect rests on its remarkable properties. In Sec.\ VI we apply our formulas to Laughlin states in an arbitrary Landau level, computing their density profiles and extending rigidity results from~\cite{LieRouYng-16,LieRouYng-17,RouYng-17,OlgRou-19}.

\bigskip

\section{Projected Hamiltonians and densities in quantum Hall physics}

Let us start from the many-body Hamiltonian (in symmetric gauge) for interacting 2D electrons in a constant perpendicular external magnetic field $B$ and a one-body potential~$V$
\begin{equation}\label{eq:full hamil}
H = \sum_{j=1} ^N \left( -\im \nabla_{\rv_j} + \frac{B}{2} \rv_j ^{\perp} \right) ^2 -NB+ V (\rv_j) + \sum_{i<j} w (\rv_i-\rv_j). 
\end{equation}
Here $w$ is the radial repulsive pair interaction potential, modeling 3D Coulomb interactions\footnote{Although electrons are confined to a 2D interface, they retain their interactions via the 3D Coulomb kernel.} in quantum Hall (QH) physics, but more general choices are also of interest. The one-body potential $V$ incorporates trapping in a finite size sample, plus the electrostatic potential generated by impurities. Mathematical conditions on the potentials $V$ and $w$ will be stated below. For convenience we have subtracted $NB$ from the kinetic part of the energy so that its lowest value is 0.
In the sequel vectors $\rv = (x,y) \in \R^2$ will very often be identified with complex numbers $z=x+{\mathrm i} y\in C$. 

As appropriate for electrons we consider the action of $H$ on the fermionic antisymmetric space
\begin{equation}\label{eq:full space}
 L^2_{\as} (\R^{2N}) = \bigotimes_{\as} ^N L^2 (\R^2). 
\end{equation}
For bosons one considers the symmetric tensor product instead; this is relevant for rotating cold atomic gases, where the rotation frequency takes over the role of the magnetic field.

In fractional quantum Hall (FQH) physics, the energy scales are set, by order of importance: first by the magnetic field, second by the repulsive interactions, third by the one-body potential. Our discussion in the sequel will reflect this. 



\subsection{Landau levels}
For large $B$ it is relevant to restrict particles to live in an eigenspace of $\left( -\im \nabla_{\rv} + \frac{B}{2} \rv ^{\perp} \right) ^2.$ Denote by
\begin{equation}\label{eq:nLL}
\nLL := \left\{ \psi \in L^2 (\R^2), \quad \left( -\im \nabla_{\rv} +\hbox{ $\frac{B}{2}$} \rv ^{\perp} \right) ^2 \psi =  2 B \left(n + \hbox{$\frac{1}{2}$} \right) \psi \right\}
\end{equation}
the $n$-th Landau level. The lowest level ($n=0$) will be denoted by $\LLL$; it is made of analytic $\times$ gaussian functions:
\begin{equation}\label{eq:LLL}
\LLL = \left\{ \psi (\rv) = f (x + \im y) e^{-\frac{B}{4} |\rv| ^2} \in L^2, \quad f \mbox{ analytic }  \right\}. 
\end{equation}
The corresponding fermionic spaces for $N$ particles will be denoted by $\nLL_N$ and $\LLL_N$:
\begin{equation}\label{eq:LLN}
\LLL_N = \bigotimes_\as ^N \LLL, \quad \nLL_N = \bigotimes_\as ^N \nLL.  
\end{equation}


\subsection{Hamiltonians in the LLL}
Consider projecting~\eqref{eq:full hamil} to the LLL. The first term is just a constant, the others can be expressed using the canonical basis 
\beq \varphi_m (z) = (\pi m!)^{-1/2} z^m e^{-\frac{B}{4} |z| ^2}.\eeq
Projecting~\eqref{eq:full hamil} to the LLL leads formally to 
\begin{equation}\label{eq:LLL hamil}
H_{w,V} ^{\LLL} = \sum_{j=1}^N \sum_{m,\ell \geq 0} \left\langle \varphi_m | V | \varphi_\ell \right\rangle \left| \varphi_m \right \rangle \left\langle \varphi_\ell \right |_j + \sum_{i<j} \sum_{m\geq 0} \left\langle \varphi_m | w | \varphi_m \right\rangle (|\varphi_m \rangle \langle \varphi_m |)_{ij} 
\end{equation}
where $(|\varphi_m \rangle \langle \varphi_m |)_{ij}$ projects\footnote{Note that fermionic wave-functions do not see the even $m$ terms of~\eqref{eq:LLL hamil}.} the relative coordinate $\rv_i - \rv_j$ on the state $\varphi_m$. Similarly $\left| \varphi_m \right\rangle \left\langle \varphi_\ell \right |_j$ is the operator mapping $\varphi_\ell$ to $\varphi_m$, acting on the $j$ variable only.

We assume that the potentials are measurable functions and that the \lq\lq moments\rq\rq 
\beq \langle \varphi_m|\, |V|\, |\varphi_m\rangle = \frac{1}{\pi m!}\int_{\mathbb R^2} |V(\mathbf r)|r^{2m} e^{-Br^2} d^2\mathbf r, \quad \langle \varphi_m|\, |w|\, |\varphi_m\rangle = \frac{1}{\pi m!} \int_{\mathbb R^2} |w(\mathbf r)| r^{2m}e^{-Br^2}d^2\mathbf r\eeq
are finite for all $m$. Then \eqref{eq:LLL hamil} is well defined as a quadratic form on a dense subspace of $\LLL_N$. Finiteness for $m=0$ means in particular that the potentials are in $L^1(\mathbb R^2)_{\rm loc}$ so derivatives of the potentials are well defined in the sense of distributions.  If the moments are uniformly bounded in $m$ and the potentials rotationnally symmetric (which implies the absence of terms $m\neq\ell$ in~\eqref{eq:LLL hamil}), then the corresponding operators are bounded and defined on the whole space. 

Usually in FQH physics one focuses attention on the interaction term in \eqref{eq:LLL hamil} (i.e., one sets $V\equiv 0$). There are no off-diagonal terms in it because $w$ is assumed to be radially symmetric. The coefficients $\left\langle \varphi_m | w | \varphi_m \right\rangle$ are often called ``Haldane pseudo-potentials'', cf. \cite{Haldane-1983}. If $w$ decreases rapidly at infinity then they also decrease rapidly with increasing $m$ and a basic observation in the theory of the fractional quantum Hall effect (FQHE) is that, if one truncates the sum~\eqref{eq:LLL hamil} at $m = \ell-1$, then the Laughlin state
\begin{equation}\label{eq:Laughlin}
\PsiLau ^{(\ell)} (z_1,\ldots,z_N) = \cLau ^{(\ell)} \prod_{i<j} (z_i-z_j) ^{\ell} e^{-\frac{B}{4} \sum_{j=1} ^N |z_j| ^2} 
\end{equation}
is an exact ground state ($L^2$-normalized by the constant in front). One can then argue, and prove to some extent~\cite{LieRouYng-16,LieRouYng-17,RouYng-17,OlgRou-19}, that such functions and natural variants are extremely robust, in particular to the addition of the external potential $V$.

\begin{rem} For very strong interaction potentials of range much smaller than the magnetic length $\sim B^{-1/2}$, in particular if there is a hard core, an expansion in terms of moments as in \eqref{eq:LLL hamil} is not adequate. This situation is analysed in  \cite{SY-20} which generalizes the paper \cite{LS-09}. It is shown that in an appropriate scaling limit the pseudo-potential operators $|\varphi_m\rangle\langle \varphi_m|$ also emerge, but with renormalized pre-factors involving the scattering lengths of the interaction potentials in the different angular momentum channels, rather than expectation values as in  \eqref{eq:LLL hamil}.\end{rem}


\subsection{Hamiltonians in higher Landau levels}
Consider now a situation where $n-1$ Landau levels are filled, so that additional electrons must sit in the higher ones, because of the Pauli principle. It is a common procedure in the FQH physics community~\cite{Jain-07,GoeLed-06,Tong-16} to model this situation using lowest Landau level (LLL) functions again. The basis for this reduction is the following statement, contained in one form or another  in a number of sources, in particular~\cite{Haldane-18,CheBis-18,Tong-16,Jain-07, MacDonaldGirvin-86, CifQui-10}. 

\begin{theorem}[\textbf{Effective Hamiltonian in the $n$-th Landau level}]\label{thm:main}\mbox{}\\
Let $H$ be given by~\eqref{eq:full hamil} and define
\beq H^{\nLL} = P^{\nLL} H P^{\nLL} \eeq
where $P^{\nLL}$ orthogonally projects all particles into the $\nLL$, i.e. it is the orthogonal projector from $L^2_{\as}(\R^{2N})$ to $\nLL_N$.

Then, for any $n$ there exists an effective external potential $V_n$ and an effective (radial) interaction potential $w_n$, depending only on $V,w$ and $n$ such that 
\beq H^{\nLL}- n\cdot 2BN \eeq
is unitarily equivalent to the LLL Hamiltonian $H_{V_n,w_n} ^{\LLL}$, defined as in~\eqref{eq:LLL hamil}, and acting on  $\LLL_N$. 

The effective $n$-th level potentials are as follows: 
\begin{align} V_n (\rv) &= L_n \left(-\mbox{$\frac 14$}{\Delta} \right) V (\rv) \label{eq:eff pot}
\\ 
 w_n (\rv) &= L_n \left(-\mbox{$\frac 14$} {\Delta}\right) ^2 w (\rv)\label{eq:eff int}
\end{align}
where $\Delta$ is the Laplacian and $L_n$ the Laguerre polynomial
\begin{equation}\label{eq:Laguerre pre}
L_n (u) = \sum_{l=0} ^n {n\choose l} \frac{(-u)^l}{l!}.  
\end{equation}
\end{theorem}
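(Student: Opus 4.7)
The plan is to exploit the one-body decomposition $\rv = \Rv + (\rv - \Rv)$ into guiding-center $\Rv$ and cyclotron coordinate $\rv - \Rv$, with $[\Rv,\rv - \Rv] = 0$, as advertised in the introduction, together with the unitary map $U_n : \LLL \to \nLL$ built from the cyclotron raising operator $a^\dagger$ as $(a^\dagger)^n/\sqrt{n!}$. Its $N$-fold tensor product $(U_n)^{\otimes N}: \LLL_N \to \nLL_N$ preserves fermionic antisymmetry. The kinetic part of $H^{\nLL}$ is the constant $2BNn$ on $\nLL_N$, so after the subtraction $n\cdot 2BN$ only the conjugates of the potentials $V$ and $w$ remain to be analysed.

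For the one-body potential I would first expand it in Fourier modes,
\beq
V(\rv) = \int \hat V(\mathbf{k})\, e^{i \mathbf{k}\cdot\rv}\,\frac{d^2 k}{(2\pi)^2},
\eeq
and use the factorization $e^{i\mathbf{k}\cdot\rv} = e^{i\mathbf{k}\cdot\Rv}\, e^{i\mathbf{k}\cdot(\rv - \Rv)}$, valid because the two summands commute. The displacement $e^{i\mathbf{k}\cdot\Rv}$ acts only on the guiding-center sector, hence commutes with $P^{\nLL}$, whereas $e^{i\mathbf{k}\cdot(\rv-\Rv)}$ acts only on the cyclotron Fock space. Thus
\beq
P^{\nLL} e^{i\mathbf{k}\cdot\rv} P^{\nLL} = \langle n | e^{i\mathbf{k}\cdot(\rv-\Rv)} | n \rangle_{\mathrm{cyc}}\, e^{i\mathbf{k}\cdot\Rv}\, P^{\nLL},
\eeq
and the cyclotron matrix element is the standard displacement-operator expression $L_n(|\mathbf{k}|^2/(2B))\, e^{-|\mathbf{k}|^2/(4B)}$. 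Comparing with the same computation at $n=0$ and conjugating by $(U_n)^{\otimes N}$ yields $((U_n)^{\otimes N})^\dagger P^{\nLL} V P^{\nLL} (U_n)^{\otimes N} = P^{\LLL} V_n P^{\LLL}$ with $\hat V_n(\mathbf{k}) = L_n(|\mathbf{k}|^2/(2B)) \hat V(\mathbf{k})$. Translating the symbol $|\mathbf{k}|^2$ into $-\Delta$ in position space, with the normalization used for the LLL basis in~\eqref{eq:LLL}, gives precisely $V_n = L_n(-\Delta/4)V$.

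The two-body interaction is handled identically, using the relative coordinate: since
\beq
\rv_i - \rv_j = (\Rv_i - \Rv_j) + \bigl[(\rv_i - \Rv_i) - (\rv_j - \Rv_j)\bigr],
\eeq
with the two summands commuting, each of the cyclotron factors $e^{i\mathbf{k}\cdot(\rv_i-\Rv_i)}$ and $e^{-i\mathbf{k}\cdot(\rv_j-\Rv_j)}$ contributes one copy of the Laguerre symbol. The effective Fourier multiplier becomes $L_n(|\mathbf{k}|^2/(2B))^2$, yielding the effective interaction $w_n = L_n(-\Delta/4)^2 w$ that appears in~\eqref{eq:eff int}. Reassembling the Fourier integrals and using that $(U_n)^{\otimes N}$ is unitary on the antisymmetric spaces then gives the claimed unitary equivalence between $H^{\nLL} - n\cdot 2BN$ and $H^{\LLL}_{V_n,w_n}$.

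The main obstacles I anticipate are technical rather than conceptual: justifying the Fourier representation on the dense subspace on which~\eqref{eq:LLL hamil} is well defined as a quadratic form under the moment hypotheses on $V$ and $w$, and carefully tracking the constants involving $B$ when passing between the Fock-space matrix element $\langle n|e^{i\mathbf k\cdot(\rv-\Rv)}|n\rangle$ and the Laguerre polynomial in $-\Delta/4$. A purely algebraic variant, in which one computes matrix elements directly in the canonical basis of $\nLL$ transported from the LLL by $(a^\dagger)^n/\sqrt{n!}$, would avoid Fourier analysis altogether but would obscure the reason the Laguerre polynomial emerges so naturally.
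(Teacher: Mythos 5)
Your proposal is correct and coincides with the paper's second proof (Sec.~5.2): the factorization $e^{\im\qv\cdot\rv}=e^{\im\qv\cdot\Rv}e^{\im\qv\cdot(\rv-\Rv)}$, the cyclotron matrix element $L_n(|\qv|^2/4)e^{-|\qv|^2/8}$ obtained via Baker--Campbell--Hausdorff, and the squaring of the Laguerre symbol for the two-body term are exactly the content of Lemma~\ref{lem:plane} and the argument following it. The ``purely algebraic variant'' you mention at the end is precisely the paper's first proof, via the reshuffling Lemma~\ref{lem:proj dens} applied to particle densities.
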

\begin{rem} Since we have not assumed any regularity of $V$ and $w$ except being measurable functions with finite moments the differentiations in \eqref{eq:eff pot} and \eqref{eq:eff int} have in general to be understood in the sense of distributions. This poses no problems, however, because the potentials are integrated against densities of wave functions in $\LLL_N$, which are smooth functions. Moreover, the densities have the form of polynomials times a gaussian so the finiteness of the moments for all $m$ guarantees that the integrals are well defined.
In Sec.\ V B it will be convenient to assume that the potentials have integrable Fourier transforms, but this is not really an extra restriction because the general case follows by a density argument. 
\end{rem}
\medskip

We shall give two proofs of the Theorem in Sec.~\ref{sec:proof thm}. Note that the constant we subtract from $H^{\nLL}$ is just the magnetic kinetic energy of $N$ particles in the $\nLL$.

\medskip

What the  Theorem says is that one can profit from the nice properties of the LLL to study phenomena in other Landau levels. This is particularly relevant because the main features are supposed not to depend very much on the potentials $V_n,w_n$ entering~\eqref{eq:LLL hamil}. In particular the Laughlin states 
have equivalents in any Landau level (cf Sec.~\ref{sec:Laughlin}). 
\bigskip

Since potential energies are integrals of potentials against particle densities, Theorem~\ref{thm:main} can be seen as a corollary of a general result about  particle densities of a many body states in different Landau levels. We recall that the $k$-particle density of an $N$-particle state with wave function $\Psi(\mathbf r_1,\dots, \mathbf r_N)$
is by definition
\beq 
\rho^{(k)}_\Psi(\mathbf r_1,\dots, \mathbf r_k)= {N \choose k} \int_{\mathbb R^{2(N-k)}} |\Psi(\mathbf r_1\cdots ;\mathbf r'_{k+1}\cdots \mathbf r'_N)|^2\mathrm d\mathbf r'_{k+1}\cdots \mathbf \mathrm dr'_N.
\eeq
If $\Psi\in \nLL_N$ for some $n$, then $\rho^{(k)}_\Psi$ is a $C^\infty$ function and decreases rapidly at infinity. This is discussed in Sec. V.

\begin{theorem}[\textbf{Particle densities in the $n$-th Landau level}]\label{thm:main2}\mbox{}\\
There is a unitary mapping $\mathcal U_{N,n}: \nLL_N\to \LLL_N$ such that if $\Psi_0=\mathcal U_{N,n}\Psi_n\in\LLL_N$ with $\Psi_n\in\nLL_N$ then for all $k$ 
\beq\label{eq:nLLdens}\rho^{(k)}_{\Psi_n}(\mathbf r_1,\dots, \mathbf r_k)
= \prod_{i=1}^k L_n \left(-\mbox{$\frac 14$}\Delta_{\mathbf r_i}\right)\rho^{(k)}_{\Psi_0}(\mathbf r_1,\dots, \mathbf r_k)
\eeq
\end{theorem}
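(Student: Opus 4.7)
The plan is to build $\mathcal U_{N,n}$ out of cyclotron raising and lowering operators, and to read off the density identity from the characteristic function of $\rho^{(k)}_\Psi$ in Fourier variables. At the one-body level, the magnetic Laplacian splits as a sum of two commuting harmonic oscillators: the cyclotron oscillator with ladder operators $(a,a^\dagger)$ whose occupation number fixes the Landau level, and the guiding-center oscillator with ladder operators $(b,b^\dagger)$ that commute with the magnetic kinetic energy and label orthonormal bases inside each level. The isometry $(a^\dagger)^n/\sqrt{n!}$, combined with the identity on the guiding-center side, gives a unitary $U_n:\LLL\to \nLL$. Defining $\mathcal U_{N,n}:=(U_n^{\ast})^{\otimes N}$ and noting that the \emph{same} one-body unitary is applied to every tensor factor, antisymmetry is preserved and one obtains the sought unitary $\nLL_N\to \LLL_N$.

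Next I would rewrite the density identity as a statement about Fourier transforms. Testing the $k$-particle density against plane waves gives
\[
\widehat{\rho^{(k)}_\Psi}(\mathbf k_1,\ldots,\mathbf k_k)=\binom{N}{k}\Big\langle \Psi,\prod_{j=1}^k e^{-\im \mathbf k_j\cdot \mathbf r_j}\Psi\Big\rangle,
\]
so, thanks to the tensor-product structure of $\mathcal U_{N,n}$, it suffices to establish the one-body operator identity $U_n^{\ast}\,e^{-\im \mathbf k\cdot \mathbf r}\,U_n = L_n(|\mathbf k|^2/4)\,e^{-\im \mathbf k\cdot \mathbf r}$ as operators on $\LLL$. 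Once this is proved, $k$-particle densities follow automatically because the cyclotron sectors of distinct particles are independent and the Fourier kernel factorizes particle by particle.

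The core calculation uses the decomposition $\mathbf r=\boldsymbol{\xi}+\boldsymbol{\eta}$ into guiding-center and cyclotron coordinates. Since the components satisfy $[\xi_\alpha,\eta_\beta]=0$, the translation operator factorizes as $e^{-\im \mathbf k\cdot \mathbf r}=e^{-\im \mathbf k\cdot \boldsymbol{\xi}}\,e^{-\im \mathbf k\cdot \boldsymbol{\eta}}$. Because $U_n$ is built only from cyclotron ladder operators, it commutes with $e^{-\im \mathbf k\cdot \boldsymbol{\xi}}$, so everything is reduced to the cyclotron conjugate $U_n^{\ast}\,e^{-\im \mathbf k\cdot \boldsymbol{\eta}}\,U_n$. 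Evaluated against the cyclotron ground state this is the diagonal matrix element $\langle n|D(\alpha)|n\rangle$ of a displacement operator in the $n$-th Fock state, with $\alpha$ linear in $\mathbf k$; the standard identity $\langle n|D(\alpha)|n\rangle=L_n(|\alpha|^2)\,e^{-|\alpha|^2/2}$ produces the Laguerre polynomial, while the residual gaussian recombines with the guiding-center exponential to reconstitute the full plane wave $e^{-\im \mathbf k\cdot \mathbf r}$. Finally, inverting the Fourier transform converts each scalar prefactor $L_n(|\mathbf k_j|^2/4)$ into the differential operator $L_n(-\Delta_{\mathbf r_j}/4)$ acting on $\rho^{(k)}_{\Psi_0}$, which is the announced formula.

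The main obstacle is the bookkeeping in this last step: one has to verify carefully that the gaussian produced by the coherent-state matrix element recombines with the guiding-center contribution so as to reconstruct the original plane wave, leaving exactly the Laguerre factor $L_n(|\mathbf k|^2/4)$ as the only discrepancy between the $n$-th and the lowest level. This requires fixing normalization conventions for $\boldsymbol{\xi}$, $\boldsymbol{\eta}$, $a$, $b$ consistent with those of the theorem's magnetic length, and checking that the factorization $e^{-\im \mathbf k\cdot \mathbf r}=e^{-\im \mathbf k\cdot \boldsymbol{\xi}}\,e^{-\im \mathbf k\cdot \boldsymbol{\eta}}$ survives the restriction to the LLL. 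Everything else — tensor-product extension, passing to $k$-particle densities, inverting the Fourier transform — is routine.
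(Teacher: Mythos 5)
Your proposal is correct and follows essentially the second of the paper's two proofs: the splitting $\mathbf r=\mathbf R+\widetilde{\mathbf R}$ into commuting guiding-center and cyclotron parts, the factorization of $e^{\im\mathbf q\cdot\mathbf r}$, and the displacement-operator matrix element $\langle n|D(\alpha)|n\rangle=L_n(|\alpha|^2)e^{-|\alpha|^2/2}$ are exactly the content of Lemma~\ref{lem:plane}, after which Fourier inversion gives~\eqref{eq:nLLdens}. The only caveat is notational: your one-body identity should carry the projectors, $U_n\Pi_n e^{\im\mathbf q\cdot\mathbf r}\Pi_n U_n^{*}=L_n(|\mathbf q|^2/4)\,\Pi_0 e^{\im\mathbf q\cdot\mathbf r}\Pi_0$, since the plane wave does not preserve a Landau level (the paper also offers a more elementary position-space route via Lemma~\ref{lem:proj dens}, an induction on $n$ using the Laguerre recursion).
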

Theorem~\ref{thm:main} follows as a corollary if one integrates  $V(\mathbf r)$ against the right hand side of \eqref{eq:nLLdens} with $k=1$, respectively $w(\mathbf r_1-\mathbf r_2)$ with $k=2$, and shifts the differentiations to the potentials by partial integration.

Conversely, Theorem \ref{thm:main2} (for $k=1,2$) follows from Theorem \ref{thm:main} if one regards the potentials as trial functions for the densities.
\medskip

In the following we shall define (in several related but distinct ways) the unitary mappings between Landau levels (see~\eqref{eq:unitary N}), and discuss the proofs of Theorems~\ref{thm:main} and ~\ref{thm:main2}. The physically most appealing way to interpret these unitaries is to see them as replacing the physical coordinates of electrons by the coordinates of the guiding centers of their cyclotron orbits, mathematically implemented through the use of coherent states. Indeed, in the LLL the position coordinates and the guiding center coordinates are really two different names for the same thing as will be evident in Sec.~\ref{sec:kernel}. Moreover, the quantum mechanical spread of both coordinates is of the order of the magnetic length $\sim B^{-1/2}$. The cyclotron radius in Landau level $n$ has an extra factor $\sqrt {n+1}$. Thus it is plausible that for large $B$ and small $n$ the difference between position and guiding center coordinates, and the non-commutativity of the latter, is not of much significance in thermodynamically large systems, i.e., for large $N$, provided the magnetic length stays much smaller than the interparticle distance.

Although the coherent state approach offers a satisfactory physical picture it is not always the most convenient one from a computational point of view. This motivates our review of alternate routes to the mappings between levels.

We also take the example of Laughlin states to explain how to deduce properties of the actual wave-functions in $\nLL_N$ minimizing effective energies from their representation in the $\LLL_N$ using the above unitary map. This amounts to saying that the density in guiding center coordinates can to a large extend indeed be identified with the true, physical, density in electron coordinates. We believe this is crucial for the understanding of the efficiency of the correspondence between Landau levels in FQH physics.  

\section{The Landau Hamiltonian and the two oscillators}

\subsection{The cyclotron oscillator}

The magnetic Hamiltonian of a particle of charge $q$ and effective mass $m^*$, moving in a plane with position variables  ${\mathbf r}=(x,y)$, is
\beq H=\frac 1{2m^*}(\pi_x^2+\pi_y^2)\eeq
where 
\beq \mbox{\boldmath$\pi$}=(\pi_x,\pi_y)=\mathbf p-q\mathbf A\eeq
is the gauge invariant kinetic momentum with $\mathbf A$ the magnetic vector potential and 
\beq\mathbf p=-\im \hbar (\partial_x,\partial_y)\eeq
the canonical momentum. We assume a homogeneous magnetic field of strength $B$ perpendicular to the plane and choose the
symmetric gauge  
\beq\mathbf A=\frac B2(-y,x).\eeq
Moreover, we choose units and signs so that $|q|=1$, $qB\equiv B>0$, $\hbar=1$ and $m^*=1$. Then
\beq \pi_x=-\im \partial_x+\half B y,\quad \pi_y=-\im \partial_y-\half B x\eeq
and the kinetic momentum components satisfy the canonical commutation relations (CCR)
\beq [\pi_x,\pi_y]=\im \ell_B^{-2}\label{CCRpi}\eeq
with 
\beq\ell_B=B^{-1/2}\eeq
the magnetic length.

In terms of the  creation and annihilation operators
\beq \quad a^\dagger=\frac {\ell_B}{\sqrt 2}(-\pi_y-\im \pi_x), \quad a=\frac{ \ell_B} {\sqrt 2 }(-\pi_y+\im \pi_x)\label{a}\eeq
with
 $[a,a^\dagger]=1$
the Hamiltonian is
\beq H=2B(a^\dagger a+\half). \eeq
Powers of $a^\dagger$ generate normalized eigenstates 
\beq\varphi_n=(n!)^{-1/2}(a^\dagger)^n\varphi_0\eeq 
with $a\varphi_0=0$ and the energy eigenvalues
\beq E_n=(n+\half)2B, n=1,2,\dots.\label{Landauspec}\eeq
In position variables the corresponding wave functions are
\beq \varphi_0({\bf r})=\frac 1{\sqrt {\pi}}\,e^{-(x^2+y^2)/4\ell_B}, \quad \hbox{and}\quad \varphi_n({\bf r})=\frac 1{\sqrt {\pi n!}}\, 
(x-\im y)^ne^{-(x^2+y^2)/4\ell_B}.\eeq

\subsection{Complex notation}\label{sec:complex}

With 
\begin{equation}
z =x+iy, \quad \bar z=x-\im y, \quad \partial_z=\half(\partial_x-\im \partial_y), \quad \partial_{\bar z}=\half(\partial_x+\im \partial_y) 
\end{equation}
we can write
\beq a^\dagger=\frac 1{\sqrt 2 \ell_B} (\half \bar z-2\ell_B^2 \partial_z),\quad   a=\frac 1{\sqrt 2 \ell_B} (\half z+2\ell_B^2 \partial_{\bar z}).\label{aellb}\eeq
Choosing units so that $B=2$, or equivalently, defining $z=\frac 1{\sqrt 2 {\ell_B}}(x+\im y)$, this becomes
\beq a^\dagger=\half \bar z- \partial_z, \quad  a= \half z+ \partial_{\bar z}.\eeq
Also, the gaussian factor $e^{-(|x|^2+|y|^2)/4\ell_B^2}$ becomes $e^{-|z|^2/2}$.

For computations it is often convenient to use the corresponding operators $\hat a^\dagger$, $\hat a$, acting on the pre-factors to the gaussian and defined by
\beq  a^{\#} \left[f(z,\bar z)e^{-|z|^2/2}\right]=\left[\hat a^{\#} f (z,\bar z)\right]e^{-|z|^2/2}.\eeq
These are
\beq \hat a^\dagger=\bar z-\partial_z, \quad  \hat a=\partial_{\bar z}.\label{hata}\eeq
In the sequel we shall generally use the hat $\hat{\phantom a}$ on operators and functions to indicate that the gaussian normalization factors are excluded.

Besides the standard definition $z=x+\im y$, other complexifications of $\mathbb R^2$ are  possible and can be useful, as stressed in~\cite{Haldane-18}. 

\subsection{The guiding center oscillator}\label{sec:guiding}

The classical 2D motion of a charged particle in a homogeneous magnetic field consists of a cyclotron rotation around \lq\lq guiding centers". The quantization of the cyclotron motion is the physical basis for the energy spectrum \eqref{Landauspec}, and the creation operators $a^\dagger$ generate the corresponding harmonic oscillator eigenstates. Every energy eigenvalue is infinitely degenerate, due to the different possible positions of the guiding centers. 

Quantum mechanically the dynamics of the guiding centers is described by another harmonic oscillator commuting with the first one. One arrives at this  picture by splitting the (gauge invariant) position operator $\bf r$ into the guiding center part $\bf R$ and the cyclotron part 
\beq \widetilde{\mathbf R}=\ell_B^2\mathbf n\times \mbox{\boldmath$\pi$},\eeq 
with $\mathbf n$  the unit normal vector to the plane. Both  $\bf R$  and $\widetilde{\bf R}$ are gauge invariant and they commute with each other. On the other hand the two components of $(R_x, R_y)$ of $\bf R$  do not commute  and likewise for the components of $\widetilde{\bf R}$. More precisely, we have
\beq \mathbf r=\mathbf R+\widetilde{\mathbf R}\label{splitting}\eeq
with
\beq R_x=x+\ell_B^2 \pi_y=\half x-\im \ell_B^2\partial_y, \quad R_y=y-\ell_B^2\pi_x=\half y+\im \ell_B^2\partial_x,\eeq
\beq \widetilde{R}_x=-\ell_B^2 \pi_y=\half x+\im \ell_B^2\partial_y, \quad \widetilde{R}_y=\ell_B^2\pi_x=\half y-\im \ell_B^2\partial_x\eeq
and the commutation relations
\beq[\mathbf R,\widetilde {\mathbf R}]=\mathbf 0,\quad
[R_x,R_y]=-\im \ell_B^2,\quad [\widetilde{R}_x,\widetilde{R}_y]= \im \ell_B^2.\label{CCR}
\eeq
The creation and annihilation operators for $\widetilde{\mathbf R}$ are the same as \eqref{a},
\beq a^\dagger=\frac1{\sqrt 2 \ell_B}(\widetilde{R}_x-\im \widetilde{R}_y),\quad  a=\frac1{\sqrt 2 \ell_B}(\widetilde{R}_x+ \im \widetilde{R}_y).\label{aa}\eeq
Those for the guiding center, on the other hand, are
\beq b^\dagger=\frac 1{\sqrt 2 \ell_B}(R_x+\im R_y),\quad b=\frac 1{\sqrt 2 \ell_B}(R_x-\im R_y).\label{b}\eeq
Note the different signs compared to \eqref{aa} due to the different signs in \eqref{CCR}. We have 
$[b,b^\dagger]=1$
and in complex notation
\beq   b^\dagger=\frac 1{\sqrt 2 \ell_B} (\half z-2\ell_B^2 \partial_{\bar z}),\quad b=\frac 1{\sqrt 2 \ell_B} (\half \bar z+2\ell_B^2 \partial_z).\label{bellb}\eeq
For $B=2$ 
\beq b^\dagger=\half z- \partial_{\bar z}, \quad  b= \half \bar z+ \partial_{z}\eeq
and 
\beq   \hat b^\dagger=z-\partial_{\bar z},\quad \hat b=\partial_{z}.\label{hatb}\eeq
The splitting \eqref{splitting} corresponds to
\beq
\left({\begin{array}{c} z\\ \bar z\\ \end{array}}\right)=\left({\begin{array}{c} b^\dagger\\ b\\ \end{array}}\right)+\left({\begin{array}{c} a\\ a^\dagger \\ \end{array}}\right)\label{splitting2}.
\eeq
While the operators $a^\dagger, a$ increases or decrease the Landau level index, the operators $b^\dagger, b$ leave each Landau level invariant. Pictorially speaking we can say that operators associated with the cyclotron oscillator move states \lq\lq vertically\rq\rq, i.e. act as ladder operators,  while those associated with the guiding center oscillator move them \lq\lq horizontally\rq\rq. 

With $\varphi_{0,0}=\varphi_0$ the common, normalized ground state for both oscillators, \beq a\varphi_{00}=b\varphi_{00}=0, \eeq 
the states 
\beq\varphi_{n,m}= \frac1{\sqrt{n!m!}}(a^\dagger)^n(b^\dagger)^m\varphi_{0,0}=\frac1{\sqrt{n!m!}}(b^\dagger)^m(a^\dagger)^n\varphi_{0,0},\quad n,m=0,1,\dots\eeq
form a basis of common eigenstates of the oscillators with $\varphi_{n,0}$ being the previously  defined $\varphi_n$. For fixed $n$ the states $\varphi_{n,m}$, $m=0,1,\dots$ generate the Hilbert space of the $n$'th Landau level, which we shall denote by $n$LL. The lowest Landau level will be denoted LLL.

Using complex coordinates the wave functions with $n=0$ respectively $m=0$ are
\beq \varphi_{0,m} (z,\bar z)=\frac 1{\sqrt{\pi m!}}\, z^m e^{-|z|^2/2},\quad \varphi_{n,0}(z,\bar z)=\frac 1{\sqrt {\pi n!}}\, \bar z^ne^{-|z|^2/2}.\eeq
More generally, the wave functions
\beq \varphi_{n,m}(z,\bar z)=\frac 1{\sqrt {\pi n!\,m!}} [(z-\partial_{\bar z})^m \bar z^n ] e^{-|z|^2/2}=\frac 1{\sqrt {\pi n!\,m!}}[(\bar z-\partial_{z})^n z^m ] e^{-|z|^2/2}\label{nLLbasis}\eeq
can be written in terms of associated Laguerre polynomials. They are eigenfunctions of the angular momentum operator in the symmetric gauge (acting on the pre-factor to the gaussian)
\beq \hat L=z\partial_z-\bar z\partial_{\bar z}\eeq
with eigenvalues $M=-n+m=-n,-n+1,\dots$ in the $n$LL. The operators $b^\dagger, b$ shift the angular momentum within each Landau level. 

\section{Expressions of the inter-level unitary maps}

\subsection{With coherent states}

A coherent state associated with the guiding center oscillator in the $n$LL with parameter $Z\in\mathbb C$ is defined in a standard way~\cite{ComRob-12,KlaSka-85} as
\beq |Z,n\rangle=e^{(Zb^\dagger -\bar Zb)}\varphi_{n,0}=\sum_{m=0}^\infty \frac {Z^m}{\sqrt {m!}}\, \varphi_{n,m} e^{-|Z|^2/2}\label{cohstate}.\eeq
The overlap of two coherent states is
\beq
\langle Z,n|Z',n'\rangle=\delta_{n,n'}  e^{(2\bar Z Z'-|Z|^2-|Z'|^2)/2}=\delta_{n,n'}  e^{-|Z-Z'|^2/2}\,e^{{\mathrm i }\,\text{Im}\,(\bar Z Z')}.\eeq
Moreover, 
\beq
\int |Z,n\rangle\langle Z,n|\,\frac{{\mathrm d}^2Z}\pi =\Pi_n\label{nproj}
\eeq
is the projector on $n$LL, where ${\mathrm d}^2Z:=\hbox{$\frac{ \mathrm i }{2}$} {\mathrm d}Z\wedge {\mathrm d}\bar Z$ is the Lebesgue measure on the plane. Indeed, 
\beq \frac {1}{\sqrt { m!\, m'!}} 
\int \bar Z^m Z^{m'} e^{-|Z|^2} \frac{{\mathrm d}^2Z}\pi=\langle \varphi_{n,m}|\varphi_{n,m'}\rangle=\delta_{m,m'},\eeq
and
\beq
\Pi_n=\sum_{m=0}^\infty |\varphi_{n,m}\rangle\langle \varphi_{n,m}|.
\eeq

The coherent states allow an interpretation of $n$LL as a Bargmann space of analytic functions of the coherent state variable $Z$: If $\psi\in$\,$n$LL then
\beq \widehat{\Psi}(Z):= \langle \bar Z,n|\psi\rangle e^{|Z|^2/2}=\sum_{m=0}^\infty\langle \varphi_{n,m}|\psi\rangle\frac {Z^m}{\sqrt {m!}}\, \label{analyt}\eeq
is analytic in $Z$ and 
\beq \Psi(Z,\bar Z)=\widehat{\Psi}(Z)e^{-|Z|^2/2}\label{analyt2}\eeq
has the same $L^2$ norm as $\psi$ because of \eqref{nproj}. Thus the map 
\beq U_n:\psi\mapsto \Psi\eeq
is isometric from the $n$LL to the  LLL. From the definition it is clear that 
\beq U_n\varphi_{n,m}=\varphi_{0,m}\label{Un1}\eeq 
and
\beq U_n|Z,n\rangle=|Z,0\rangle,\label{Un2}\eeq
so  $U_n$ is in fact a unitary with 
\beq U_n^{-1}\varphi_{0,m}=\varphi_{n,m}\quad \hbox{and}\quad
U_n^{-1}|Z,0\rangle=|Z,n\rangle.\label{Un-1}\eeq
Either \eqref{Un1} or \eqref{Un2} can be taken as the definition of $U_n$. The unitary map 
\begin{align}\label{eq:unitary N}
\mathcal{U}_{N,n}: \nLL_N  &\rightarrow \LLL_N \nonumber\\
\Psi_N &\mapsto \left(\bigotimes_\as ^N U_n \right) \Psi_N
\end{align}
is that used in Theorem~\ref{thm:main}.

The function $\Psi(Z,\bar Z)$ coincides with the LLL wave function of $U_n\psi$ if $Z$ is identified with the complex position variable $z=x+\im y$. Note, however, 
that $Z$ is associated with the (non-commutative) components of the guiding center operator $\mathbf R$ rather than the (commutative)  position operator $\mathbf r$. By the definition \eqref{analyt} $\Psi$ depends linearly on $\psi$; the alternative definition $\Psi=\langle\psi|Z,n\rangle$, that is sometimes used, leads to an anti-unitary correspondence.

\subsection{With integral kernels}\label{sec:kernel}

Consider the coherent state \eqref{cohstate} without the gaussian normalization factors as as function of $Z; z,\bar z$: 
\beq \widehat{F}(Z; z,\bar z)=\sum_{m=0}^\infty \frac {Z^m}{\sqrt {m!}}\, \widehat{\varphi}_{n,m}(z,\bar z).\eeq
The coherent state is an eigenstate of the annihilation operator $\hat b=\partial_z$ with eigenvalue $Z$, so
\beq \widehat{F}(Z; z,\bar z)=f(z,\bar z) e^{Zz} .\eeq
Furthermore, $\hat F$ is an eigenstate of $\hat a^\dagger\hat a=(\bar z-\partial_z)\partial_{\bar z}$ to eigenvalue $n$ which leads to
\beq \widehat{F}(Z; z,\bar z)=c_n(\bar z-Z)^n e^{Zz}\eeq
with a normalization constant $c_n=1/\sqrt{\pi n!}$. The full coherent state \eqref{cohstate} as a function of $Z$, $z$ and $\bar z$, including normalization factors,  is thus given by
\beq c_n(\bar z-Z)^n e^{-(|Z|^2+|z|^2-2Zz)/2}.\eeq
Inserting this into \eqref{analyt}  gives
\beq \Psi(Z,\bar Z)=\int G(Z,\bar Z;z,\bar z)\psi(z,\bar z)\,{\mathrm d}^2z \label{psiPsi}\eeq
with
\beq G(Z,\bar Z;z,\bar z)= \frac 1{ \sqrt {\pi n!}}\,(z-Z)^n e^{-(|Z|^2+|z|^2-2Z\bar z)/2}= \frac 1{ \sqrt {\pi n!}}\,(z-Z)^n e^{-|z-Z|^2/2}
e^{-{\mathrm i }\,\text{Im}\, (\bar z Z)}.\label{G}
\eeq
This formula was derived in a different way in \cite{ChaFlo-07} and appears there (in slightly different notation) as Equation~(34). The inverse map is given by 
\beq \psi(z,\bar z)=\int \bar G(z,\bar z; Z,\bar Z)\Psi(Z,\bar Z)\,{\mathrm d}^2Z\label{Psipsi}
\eeq
with
\beq
\bar G(z,\bar z; Z,\bar Z)=\frac 1{ \sqrt {\pi n!}}\,(\bar z-\bar Z)^n e^{-(|Z|^2+|z|^2-2\bar Z z)/2}= \frac 1{ \sqrt {\pi n!}}\,(\bar z-\bar Z)^n e^{-|z-Z|^2/2}
e^{-{\mathrm i }\,\text{Im}\, (z\bar Z)}.\label{Gbar}\eeq
Note that $G$ can be written as
\beq g(z-Z)\,e^{-2{\mathrm i }\,\text{Im}\, (\bar z Z)}\eeq
where $g$ is essentially concentrated  in a disc of radius $\sim \sqrt{ n+1}$ and the factor is a phase factor. Recall also that the length unit is $\sqrt 2\ell_B\sim B^{-1/2}$.

A further remark is that for $n=0$ $G$ is the reproducing kernel in Bargmann space, confirming again that in the LLL $\Psi$ and $\psi$ are the same function on $\mathbb C$ just with different names for the variables. The phase factor in $G$ is essential for this to hold.

\subsection{With ladder operators}

A direct approach to the correspondence $n$LL $\leftrightarrow$ LLL, by-passing the coherent states, starts from \eqref{Un1}, noting that 
\beq U_n= (n!)^{-1/2} a^n\quad\hbox{restricted to $n$LL}\eeq
and hence
\beq U_n^{-1}= (n!)^{-1/2} (a^\dagger)^n\quad\hbox{restricted to LLL}.\eeq
Using the representations \eqref{hata} for the creation and annihilation operators we conclude that the following holds:  

\begin{proposition}[\textbf{Unitary maps with ladder operators}]\label{lem:simple}\mbox{}\\
Let $\psi_n\in\nLL$ have wave function
\beq\label{eq:nLLfunc}
\psi_n(z,\bar z)=\sum_{\nu=0}^n \bar z^\nu f_\nu(z)e^{-|z|^2/2},\eeq 
$f_\nu$ analytic for $\nu=0,\dots, n$. Then $\Psi_0 = U_n \psi_n\in\LLL$ has wave-function
\beq
\Psi_0(z,\bar z)= \sqrt{n!}f_n(z)e^{-|z|^2/2}.
\eeq
Conversely, the wave function of $\psi_n = U_n^{-1}\Psi_0$ is
\begin{align}\label{mainformula} 
\psi_n(z,\bar z) &=  [(\bar z-\partial_z)^nf_n(z)]e^{-|z|^2/2}\nonumber\\
&= \left[\bar z^n f_n(z)+\sum_{k=1}^n(-1)^k {n\choose k} \,\bar z^{n-k}f_n^{(k)}(z)\right]e^{-|z|^2/2}.
\end{align}
\end{proposition}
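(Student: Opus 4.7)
The plan is to apply directly the two identities $U_n = (n!)^{-1/2} a^n\vert_{n\text{LL}}$ and $U_n^{-1} = (n!)^{-1/2}(a^\dagger)^n\vert_{\text{LLL}}$ stated just before the proposition, combined with the hat-calculus from Section~\ref{sec:complex}, in which the common Gaussian factor is stripped and the ladder operators become the simple differential operators $\hat a = \partial_{\bar z}$ and $\hat a^\dagger = \bar z - \partial_z$. In this language each assertion of the proposition becomes an elementary computation.

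For the forward direction, I would first note that every $\psi_n\in n\mathrm{LL}$ can indeed be written in the form \eqref{eq:nLLfunc}: expanding $\psi_n$ in the orthonormal basis $\{\varphi_{n,m}\}_{m\geq 0}$ of \eqref{nLLbasis} and grouping the polynomial pre-factor by powers of $\bar z$ produces a sum $\sum_{\nu=0}^n \bar z^\nu f_\nu(z)$ with $f_\nu$ analytic in $z$, and one sees \emph{a posteriori} that no higher powers of $\bar z$ can occur because $\hat a^{n+1}\hat\psi_n = \partial_{\bar z}^{n+1}\hat\psi_n = 0$. With this decomposition in hand, apply $\hat a^n = \partial_{\bar z}^n$: since each $f_\nu$ is independent of $\bar z$ and $\partial_{\bar z}^n \bar z^\nu = 0$ for $\nu<n$ while $\partial_{\bar z}^n \bar z^n = n!$, only the top term survives, giving $\hat a^n \hat\psi_n = n!\,f_n(z)$. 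Dividing by $\sqrt{n!}$ and reattaching the Gaussian yields $\Psi_0(z,\bar z) = \sqrt{n!}\,f_n(z)e^{-|z|^2/2}$.

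For the inverse direction, I start from $\hat\Psi_0 = \sqrt{n!}\,f_n(z)$ and apply $(n!)^{-1/2}(\hat a^\dagger)^n = (n!)^{-1/2}(\bar z - \partial_z)^n$. The one small point to observe is that, treating $z$ and $\bar z$ as independent holomorphic and antiholomorphic variables, multiplication by $\bar z$ commutes with $\partial_z$ since $\partial_z\bar z = 0$. This legitimizes the binomial expansion
\begin{equation*}
(\bar z - \partial_z)^n = \sum_{k=0}^n \binom{n}{k}(-1)^k \bar z^{\,n-k}\,\partial_z^{\,k},
\end{equation*}
which applied to $f_n(z)$ reproduces exactly the second line of \eqref{mainformula}; reattaching the Gaussian completes the proof.

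The only step that demands slightly more care than routine calculation is the passage from \eqref{Un1} to the formula $U_n = (n!)^{-1/2}a^n$ on $n\mathrm{LL}$. This rests on $a\,\varphi_{n,m} = \sqrt{n}\,\varphi_{n-1,m}$, an immediate consequence of $[a,b^\dagger]=0$ and the standard harmonic-oscillator ladder action for the cyclotron oscillator, which gives $a^n\varphi_{n,m} = \sqrt{n!}\,\varphi_{0,m}$. Once this identification is in place the proposition reduces entirely to the two hat-computations above, and no convergence issues arise because both equalities can be verified term-by-term in the basis $\{\varphi_{n,m}\}_{m\geq 0}$ before extending by linearity to the whole Landau level.
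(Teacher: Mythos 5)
Your proposal is correct and follows essentially the same route as the paper, which derives the proposition directly from $U_n=(n!)^{-1/2}a^n$ on $n$LL, $U_n^{-1}=(n!)^{-1/2}(a^\dagger)^n$ on the LLL, and the hatted representations $\hat a=\partial_{\bar z}$, $\hat a^\dagger=\bar z-\partial_z$ of \eqref{hata}. Your additional remarks (why the pre-factor is a polynomial of degree $\leq n$ in $\bar z$, why $a^n\varphi_{n,m}=\sqrt{n!}\,\varphi_{0,m}$, and why $\bar z$ commutes with $\partial_z$) correctly fill in the elementary steps the paper leaves implicit.
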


Note that Equation~\eqref{mainformula} implies in particular that the factor $f_n(z)$ to the highest power $n$ of $\bar z$ determines uniquely the factors to the lower powers $\bar z^\nu$:
\beq f_\nu(z)= (-1)^{n-\nu}\left({\begin{array}{c} n\\ \nu\\ \end{array}}\right) f_n^{(n-\nu)}(z).\label{134}\eeq
The state is thus completely fixed by the holomorphic function $f_n$ and the Landau index $n$.
\bigskip

Incidentally, these considerations also lead to a method for projecting functions to the lowest Landau level:

\begin{proposition}[\textbf{LLL projection}]\mbox{}\\
Let
\beq \phi(z,\bar z)=\sum_{\nu=0}^n \bar z^\nu g_\nu(z)e^{-|z|^2/2}\label{135}\eeq
with arbitrary analytic functions $g_\nu$.  It's orthogonal projection into $\LLL$ is 
\beq\label{eq:LLL projection}
\mathcal P_{\rm LLL}\phi(z)=\sum_{\nu=0}^n g^{(\nu)}_\nu(z) e^{-|z|^2/2} .
\eeq
where $g^{(\nu)} = \partial_z ^{\nu} g$. 
\end{proposition}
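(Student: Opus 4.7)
My plan is to reduce the statement, by linearity, to proving that for each $\nu\in\{0,\dots,n\}$ and each analytic $g$ (with enough decay against the gaussian to produce an $L^2$ function),
\[
\mathcal P_{\rm LLL}\bigl(\bar z^\nu g(z)e^{-|z|^2/2}\bigr)=g^{(\nu)}(z)e^{-|z|^2/2}.
\]
The right-hand side is manifestly in $\LLL$, so the task is to show that both sides produce the same inner product against every LLL element.

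The key identity I would exploit is
\[
\bar z\,e^{-|z|^2}=-\partial_z e^{-|z|^2}.
\]
For an arbitrary $h$ analytic, I compute
\[
\langle h\,e^{-|z|^2/2},\,\bar z^\nu g(z)\,e^{-|z|^2/2}\rangle
=\int \overline{h(z)}\,\bar z^{\nu-1}g(z)\,\bigl(-\partial_z e^{-|z|^2}\bigr)\,\diff^2z.
\]
An integration by parts (the gaussian weight kills all boundary terms) moves $\partial_z$ onto the prefactor $\overline{h(z)}\,\bar z^{\nu-1}g(z)$. Since $\overline{h(z)}$ is antiholomorphic and $\bar z^{\nu-1}$ is annihilated by $\partial_z$, only the derivative of $g$ survives, yielding
\[
\int \overline{h(z)}\,\bar z^{\nu-1}\,g'(z)\,e^{-|z|^2}\,\diff^2z.
\]
Iterating the same step $\nu$ times eliminates $\bar z^\nu$ entirely and produces
\[
\langle h\,e^{-|z|^2/2},\,\bar z^\nu g(z)\,e^{-|z|^2/2}\rangle=\langle h\,e^{-|z|^2/2},\,g^{(\nu)}(z)\,e^{-|z|^2/2}\rangle.
\]

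Since the LLL-projection of a state is characterized by its inner products with the basis $\{\varphi_{0,m}\}_{m\geq 0}$ (each of which has the form $h(z)e^{-|z|^2/2}$ with $h$ polynomial, hence analytic), the two states $\bar z^\nu g(z)e^{-|z|^2/2}$ and $g^{(\nu)}(z)e^{-|z|^2/2}$ have identical LLL-projections, and the latter is already in $\LLL$. Summing over $\nu$ from $0$ to $n$ gives \eqref{eq:LLL projection}.

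The only real obstacle is convergence/boundary considerations in the integration by parts. This is harmless when $g_\nu$ has at most polynomial growth (the Gaussian weight kills boundary terms and makes all integrals absolutely convergent); the general case of an analytic $g_\nu$ making $\phi\in L^2$ follows by standard density or truncation arguments. Alternatively, one can compute directly with the expansion $g_\nu(z)=\sum_k c_k z^k$ and the orthogonality relation $\int \bar z^{m+\nu}z^k e^{-|z|^2}\,\diff^2 z=\pi (m+\nu)!\,\delta_{k,m+\nu}$; matching coefficients against $\varphi_{0,m}$ reproduces the same conclusion term by term and checks the answer.
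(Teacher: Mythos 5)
Your proof is correct, but it proceeds quite differently from the paper's. The paper obtains \eqref{eq:LLL projection} as a by-product of a full Landau-level decomposition: using Proposition 4.1 (the ladder-operator form of $U_n^{-1}$) and the relation \eqref{134}, it peels off the $n$LL component of $\phi$, subtracts it, and descends by induction on $n$, with the binomial identity $\sum_{\nu}\binom{n}{\nu}(-1)^\nu=0$ collapsing the telescoping sum to the stated formula. You instead argue by duality: you test $\bar z^\nu g(z)e^{-|z|^2/2}$ against an arbitrary LLL state $h(z)e^{-|z|^2/2}$, use $\bar z\,e^{-|z|^2}=-\partial_z e^{-|z|^2}$, and integrate by parts $\nu$ times, noting that $\partial_z$ annihilates both $\overline{h(z)}$ and the remaining powers of $\bar z$, so each step trades one factor of $\bar z$ for one derivative on $g$. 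This is a genuinely different and arguably more self-contained route: it needs neither the ladder-operator machinery nor the binomial identity, and it makes transparent \emph{why} the recipe ``move the $\bar z$'s to the left and replace them by $\partial_z$'' works — multiplication by $\bar z$ is adjoint, modulo the Gaussian weight, to differentiation of the holomorphic prefactor. What the paper's approach buys in exchange is strictly more information: it produces the entire splitting $\phi=\sum_{k=0}^n\psi_k$ with $\psi_k\in k\mathrm{LL}$, not just the $k=0$ component, and it reuses a proposition already established. Your treatment of the convergence issue is adequate; the monomial-expansion cross-check you give at the end (using $\int\bar z^{m+\nu}z^k e^{-|z|^2}\,\mathrm{d}^2z=\pi(m+\nu)!\,\delta_{k,m+\nu}$) is in fact the cleanest way to dispose of it entirely, so you might as well promote it from a remark to the argument itself.
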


This is well-known as the recipe ``move all $\bar{z}$ factors to the left and replace them by derivatives in $z$'', see e.g.~\cite{Jain-07}. For completeness we give the simple proof:

\begin{proof}
The previous considerations lead to a method for splitting a state 
\beq\phi\in\bigoplus_{k=0}^n \: k\mathrm{LL}\eeq
into its components in the different LL: Start with a wave function as in~\eqref{135}. It's component $\psi_n$ in the $n$LL is then given by \eqref{mainformula} with $f_n:=g_n$ and $f_\nu$ for $0\leq\nu\leq n-1$ defined by~\eqref{134}. The difference
$\tilde \phi=\phi-\psi_n$ is now in $\bigoplus_{k=0}^{n-1} k\mathrm{LL}$ and we can repeat the procedure with $n$ replaced by $n-1$, $\phi$ by $\tilde\phi$  etc. until we obtain the splitting
$\varphi=\sum_{k=0}^n \psi_{k}$
with $\psi_{k}\in k\mathrm{LL}$.
By induction over $n$, using that 
\beq 
\sum_{\nu=0}^n {n\choose \nu} (-1)^\nu=(1-1)^n=0\eeq 
this procedure implies~\eqref{eq:LLL projection}.
\end{proof}

\subsection{Recap of the different expressions for the unitary maps}

Summarizing the contents of this section, we have displayed three equivalent ways to represent a state $\Psi\in n$LL by analytic functions in Bargmann space:
\begin{itemize}
\item[{\bf 1.}] Take the scalar product 
 $\langle \bar Z,n|\Psi\rangle $
with a coherent state, cf. \eqref{analyt}.
\item[{\bf 2.}] Use Equation \eqref{psiPsi} with the integral kernel \eqref{G}.
\item[{\bf 3.}]  Apply the differential operator $\partial_{\bar z}$ $n$-times to the pre-factor of the Gaussian. Equivalently: Expand the pre-factor in powers of $\bar z$ and keep only the highest power. The inverse mapping, LLL $\to$ $n$LL, is achieved by applying the differential operator $(\bar z-\partial_z)^n$ to the analytic function representing the state in the LLL.
\end{itemize}

The last method is formally the simplest and in the Sec. V we shall use it to discuss  particle densities in higher Landau levels in term of their counterparts in the lowest Landau level. 

\section{Particle densities and the $\nLL$ Hamiltonian, proofs of the Theorems}\label{sec:proof thm}

We now have all the necessary ingredients to prove Theorems~\ref{thm:main} and~\ref{thm:main2}. We provide two slightly different approaches.

\subsection{Many body states and particle densities}\label{densities}

All considerations in Secs. III and IV carry straightforwardly over to many-body states in symmetric or anti-symmetric tensor powers $\nLL_N\equiv
 n$LL$^{\otimes_{{\rm s,a}}N}$ of single particle states by applying the single particle formulas to each tensor factor. 
 
Let $\Psi_n$ be a state in $\nLL_N$ with wave function 
\beq
\psi_n(z_1,\bar z_1; \dots ;z_N,\bar z_N)=\widehat{\psi}_n(z_1,\bar z_1; \dots ;z_N,\bar z_N)e^{-(|z_1|^2+\cdots +|z_N|^2)/2}.
\eeq
Expanding in powers of $\bar z_i$ we can write
\beq 
\widehat{\psi}_n(z_1,\bar z_1; \dots ;z_N,\bar z_N)=\prod_{i=1}^N \bar z_i^n f_n(z_1,\dots, z_N)+\sum\prod_{i=1}^N\bar z_i^{\nu_i} f_{\nu_1,\dots, \nu_N} (z_1,\dots, z_N).
\eeq
The sum is here over $N$-tuples $(\nu_1,\dots\nu_N)$ such that $\nu_k<n$ for at least one $k$. The functions $f_n$ and $f_{\nu_1,\dots, \nu_N}$ are holomorphic and the latter are, in fact, derivatives of $f_n$, cf. \eqref{134}.

The state $\Psi_0=U_n\Psi_n$ in LLL$^N$  has the wave function
\beq \psi_0(z_1,\bar z_1; \dots ;z_N,\bar z_N)=\widehat \psi_0(z_1,\dots, z_N)e^{-(|z_1|^2\cdots |z_N|^2)/2}\eeq
with 
\beq \widehat \psi_0(z_1,\dots, z_N)=(n!)^{-N/2}\prod_{i=1}^N \partial_{\bar z_i}^n \, \widehat \psi_n(z_1,\bar z_1; \dots ;z_N,\bar z_N)=
(n!)^{N/2}f_n(z_1,\dots, z_N).
\eeq 
The wave function $\widehat\psi_n$ can now be written
\beq \widehat\psi_n((z_1,\bar z_1; \dots ;z_N,\bar z_N)=(n!)^{-N/2}\prod_{i=1}^N (\bar z_i-\partial_{z_i})^n\widehat \psi_0(z_1,\dots, z_N)=
\prod_{i=1}^N (\bar z_i-\partial_{z_i})^nf_n(z_1,\dots, z_N).\eeq
Next we consider the  $k$-particle density of $\Psi_n$, defined by 
 \begin{multline} \rho^{(k)}_n(z_1,{\bar z}_1, \dots, z_k,\bar z_k)= {N \choose k} \int |\psi_n(z_1,\bar z_1; \dots ;z_N,\bar z_N)|^2 {\mathrm d}^2z_{k+1}\cdots {\mathrm d}^2z_N\\={N \choose k} \int \left|\widehat \psi_n(z_1,\bar z_1; \dots ;z_N,\bar z_N)\right|^2 e^{-(|z_1|^2+\cdots +|z_N|^2)} {\mathrm d}^2z_{k+1}\cdots {\mathrm d}^2z_N.\label{densmatr}\end{multline}
The density $\rho^{(k)}_0$ of $\Psi_0=U_n\Psi_n$ is given by the same formula with $n=0$. 

Functions in LLL are holomorphic and decrease at $\infty$ as $e^{-|z|^2/2}$; the latter follows from the fact that the Bargmann kernel \eqref{G} with $n=0$, which has this decrease, is a reproducing kernel for the Hilbert space LLL. Equation~\eqref{G} (equivalently, Equation \eqref{eq:nLLfunc}) also implies that functions in $\nLL$ are $C^\infty$ in the real position variables and decrease in the same way. This clearly carries over to wave functions in $\nLL_N$ and corresponding densities.
 
To prove Theorem~\ref{thm:main2} (which then implies Theorem~\ref{thm:main}) we have to compare   $\rho^{(k)}_n$ and $\rho^{(k)}_0$. It is, in fact, sufficient to consider the problem for a single variable, i.e., to prove the following Lemma:
\begin{lemma}[\textbf{Reshuffling differentiations}]\label{lem:proj dens}\mbox{}\\
Let $\psi(z,\bar z)=f(z)e^{-|z|^2/2}$ with holomorphic $f$. 
Then
\begin{equation}
(n!)^{-1}\overline {\left[(\bar z-\partial_{z})^n f(z)\right]}\left[(\bar z-\partial_{z})^n f(z)\right]e^{-z \bar z}= 
L_n\left(- \partial_{\bar z}\partial_z\right)\left[\overline{f(z)}f(z)e^{-z \bar z}\right]\label{nlift0} 
\end{equation}
with $L_n$ the Laguerre polynomial~\eqref{eq:Laguerre pre}. Recall that $\partial_{\bar z} \partial_z=\frac 14 \Delta$. 
\end{lemma}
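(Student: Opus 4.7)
My plan is to recast both sides of the claimed identity in terms of the ladder operators $\hat a^\dagger=\bar z-\partial_z$, $\hat a=\partial_{\bar z}$, $\hat b^\dagger=z-\partial_{\bar z}$, $\hat b=\partial_z$ introduced in Section III, which satisfy $[\hat a,\hat a^\dagger]=[\hat b,\hat b^\dagger]=1$ and commute across sectors. The starting observation is the simple ``conjugation by the Gaussian''
\[
\partial_z\bigl(g\,e^{-z\bar z}\bigr)=-(\hat a^\dagger g)\,e^{-z\bar z},\qquad \partial_{\bar z}\bigl(g\,e^{-z\bar z}\bigr)=-(\hat b^\dagger g)\,e^{-z\bar z},
\]
which yields $(\partial_{\bar z}\partial_z)^k[g\,e^{-z\bar z}]=(\hat b^\dagger\hat a^\dagger)^k g\cdot e^{-z\bar z}$. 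Combining this with the expansion $L_n(-u)=\sum_{k=0}^n\binom{n}{k}u^k/k!$, the lemma becomes equivalent, after dividing through by $e^{-z\bar z}$, to the purely algebraic operator identity
\[
L_n(-\hat b^\dagger\hat a^\dagger)\bigl[\bar f(\bar z)f(z)\bigr]=\frac{1}{n!}\bigl[(\hat b^\dagger)^n\bar f\bigr]\bigl[(\hat a^\dagger)^n f\bigr].
\]

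The next step is to compute $(\hat b^\dagger\hat a^\dagger)^k[\bar f f]$ explicitly. Since $\hat a^\dagger$ and $\hat b^\dagger$ commute, $(\hat b^\dagger\hat a^\dagger)^k=(\hat b^\dagger)^k(\hat a^\dagger)^k$; and because $\partial_z$ annihilates $\bar f(\bar z)$, $(\hat a^\dagger)^k[\bar f f]=\bar f\cdot(\hat a^\dagger)^k f$. To apply $(\hat b^\dagger)^k$ I would iterate the Leibniz-type rule $\hat b^\dagger[\bar f\,g]=(\hat b^\dagger\bar f)\,g-\bar f\,\hat a g$ together with $\hat a(\hat a^\dagger)^\ell f=\ell(\hat a^\dagger)^{\ell-1}f$ (immediate from $[\hat a,\hat a^\dagger]=1$ and $\hat a f=0$). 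A short induction, or equivalently a walk-counting argument in which each of the $k$ copies of $\hat b^\dagger$ either advances the $\hat b^\dagger$-index on $\bar f$ by one or lowers the $\hat a^\dagger$-index on $f$ by one (with signed weight equal to the current $\ell$), gives the closed form
\[
(\hat b^\dagger)^k(\hat a^\dagger)^k[\bar f f]=\sum_{s=0}^k(-1)^{k-s}\binom{k}{s}\frac{k!}{s!}\bigl[(\hat b^\dagger)^s\bar f\bigr]\bigl[(\hat a^\dagger)^s f\bigr].
\]

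Finally I would substitute this into the Laguerre sum, interchange the order of summation, factor via $\binom{n}{k}\binom{k}{s}=\binom{n}{s}\binom{n-s}{k-s}$, and collapse the inner sum through $\sum_{j=0}^{n-s}\binom{n-s}{j}(-1)^j=(1-1)^{n-s}=\delta_{n,s}$. All contributions with $s<n$ cancel exactly and only the $s=n$ term survives, producing $\tfrac{1}{n!}[(\hat b^\dagger)^n\bar f][(\hat a^\dagger)^n f]$ as required.

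The main obstacle is the closed-form evaluation in the middle step: although $\hat a^\dagger$ and $\hat b^\dagger$ commute among themselves, $\hat b^\dagger$ does not act factor-wise on a product, so one must carefully track the interplay of a raising action on $\bar f$ and a lowering action on $(\hat a^\dagger)^k f$ provided by the commutator $[\hat a,\hat a^\dagger]=1$. Once this formula is in hand the final telescoping is automatic. As a sanity check, the $n=1$ case reduces to the short computation $\hat b^\dagger\hat a^\dagger[\bar f f]=(\hat b^\dagger\bar f)(\hat a^\dagger f)-\bar f f$, which indeed combines with the $k=0$ term of $L_1(-\hat b^\dagger\hat a^\dagger)=1+\hat b^\dagger\hat a^\dagger$ to yield $(\hat b^\dagger\bar f)(\hat a^\dagger f)$, matching the right-hand side for $n=1$. (A cleaner alternative path uses the shift identities $e^{s\hat a^\dagger}f(z)=e^{s\bar z}f(z-s)$ and $e^{t\hat b^\dagger}\bar f(\bar z)=e^{tz}\bar f(\bar z-t)$ to package $\sum_{m,n}\frac{s^nt^m}{n!\,m!}[(\hat b^\dagger)^m\bar f][(\hat a^\dagger)^n f]=e^{tz+s\bar z}\bar f(\bar z-t)f(z-s)$ and to match the $(st)^n$ diagonal with the Laguerre generating function, at the cost of some extra care in extracting the diagonal.)
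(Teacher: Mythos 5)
Your proof is correct, but it follows a genuinely different route from the paper's. The paper establishes \eqref{nlift0} by induction on $n$: one applies $\partial_{\bar z}\partial_z$ to the left-hand side, uses the commutation relations $\partial_{\bar z}(\bar z-\partial_z)^n=(\bar z-\partial_z)^n\partial_{\bar z}+n(\bar z-\partial_z)^{n-1}$ and $\partial_{z}(\bar z-\partial_z)^n=(\bar z-\partial_z)^n\partial_{z}$, and matches the outcome against the three-term recursion $(n+1)L_{n+1}(u)=(2n+1)L_n(u)-nL_{n-1}(u)-uL_n(u)$. You instead expand $L_n$ coefficient by coefficient, evaluate each power $(\hat b^\dagger\hat a^\dagger)^k[\bar f f]$ in closed form, and collapse the resulting double sum via $\sum_{j}\binom{n-s}{j}(-1)^j=\delta_{s,n}$. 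Your closed form is right; the cleanest way to see it is to observe that the Leibniz rule $\hat b^\dagger(AB)=(\hat b^\dagger A)B-A(\hat a B)$ holds for \emph{arbitrary} factors $A,B$ (worth stating explicitly, since after one application the first factor $(\hat b^\dagger)^s\bar f$ is no longer antiholomorphic, so the rule as you wrote it for $\bar f$ must be upgraded before it can be iterated), whence
\[
(\hat b^\dagger)^k\bigl[\bar f\,(\hat a^\dagger)^k f\bigr]=\sum_{j=0}^k\binom{k}{j}(-1)^{k-j}\bigl[(\hat b^\dagger)^{j}\bar f\bigr]\,\hat a^{\,k-j}(\hat a^\dagger)^k f,
\]
and $\hat a^{\,k-j}(\hat a^\dagger)^k f=\frac{k!}{j!}(\hat a^\dagger)^j f$ reproduces exactly your coefficients $(-1)^{k-s}\binom{k}{s}\frac{k!}{s!}$. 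What your version buys is that the Laguerre polynomial enters only through its explicit coefficients, no recursion relation is needed, and the intermediate formula for $(\hat b^\dagger\hat a^\dagger)^k[\bar f f]$ is of independent interest; the paper's induction is shorter to state but hides the combinatorics inside the Laguerre recursion. Both arguments ultimately rest on the same two facts: $[\hat a,\hat a^\dagger]=1$ together with $\hat a f=0$, and the conjugation of $\partial_{\bar z}\partial_z$ by the Gaussian into $\hat b^\dagger\hat a^\dagger$.
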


\begin{proof} This is a straightforward computation by induction over $n$, using the recursion relation for the Laguerre polynomials,
\beq (n+1)L_{n+1}(u)=(2n+1)L_n(u)-nL_{n-1}(u)-uL_n(u).\eeq
To compute \beq\partial_{\bar z}\partial_z\left[ \overline {\left[(\bar z-\partial_{z})^n f(z)\right]}\left[(\bar z-\partial_{z})^n f(z)\right]e^{-z \bar z}\right],\eeq
starting with $n=0$ and $L_0=1$, one uses the commutation relations
\beq
\partial_{\bar z} (\bar z-\partial_z)^n=(\bar z-\partial)^n\partial_{\bar z}+n(\bar z-\partial_z)^{n-1},\quad \partial_{z} (\bar z-\partial_z)^n=(\bar z-\partial)^n\partial_{z}, 
\eeq
and the fact that $\partial_{\bar z}f(z)=\partial_{z}\overline{f(z)}=0$ for holomorphic $f$. 
\end{proof}

Applying the Lemma to each variable in a many-body wavefunction leads directly to \eqref{eq:nLLdens} and hence Eqs.\eqref{eq:eff pot} and \eqref{eq:eff int}.

\subsection{Projected Hamiltonian and guiding center coordinates}

We now discuss an alternative road to~\eqref{eq:eff pot} and~\eqref{eq:eff int}, providing additional insights. The starting point  is the splitting \eqref{splitting} of the position variables in guiding centers and cyclotron motion, and the ensuing factorization of matrix elements of  $\exp({\mathrm i}\mathbf q\cdot \mathbf r) $ which enter the Fourier transformed version of \eqref{nlift0}.

\begin{lemma}[\textbf{Plane waves projected in Landau levels}]\label{lem:plane}\mbox{}\\
For any $\mathbf{q}\in \R ^2$, identify $e^{\im \mathbf{q}\cdot \rv}$ with the corresponding multiplication operator on $L^2 (\R^2)$, where $\rv$ is the spatial variable. Let $\mathbf{R}$ be the guiding center operator defined in Sec.~\ref{sec:guiding}, $\Pi_n$ the orthogonal projector on $\nLL$ and $U_n:\nLL \rightarrow \LLL$ the inter-LL unitary map.

We have that 
\begin{equation}\label{eq:proj wave}
U_n \Pi_n e^{\im \mathbf{q}\cdot \rv} \Pi_n U_n ^* = L_n \left(\frac{|\mathbf{q}|^2}{4}\right) e^{-\frac{|\mathbf{q}|^2}{8}} \Pi_0  e^{\im \mathbf{q}\cdot \mathbf{R}} \Pi_0 = L_n \left(\frac{|\mathbf{q}|^2}{4}\right)  \Pi_0  e^{\im \mathbf{q}\cdot \mathbf{r}} \Pi_0 
\end{equation}
with the Laguerre polynomial
\begin{equation}\label{eq:Laguerre}
L_n (u) = \sum_{l=0} ^n {n\choose l} \frac{(-u)^l}{l!}.  
\end{equation}
\end{lemma}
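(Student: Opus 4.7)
The plan is to exploit the splitting $\mathbf r = \mathbf R + \widetilde{\mathbf R}$ from Sec.~\ref{sec:guiding}, which separates the position variable into commuting guiding-center and cyclotron contributions. Since $[\mathbf R,\widetilde{\mathbf R}]=0$, the plane wave factorizes:
\beq
e^{\im \mathbf q\cdot \mathbf r}=e^{\im \mathbf q\cdot \mathbf R}\,e^{\im \mathbf q\cdot \widetilde{\mathbf R}}.
\eeq
The first factor acts only on the guiding-center oscillator (generated by $b^\dagger,b$) and is therefore diagonal in the Landau level index, while the second factor acts only on the cyclotron oscillator (generated by $a^\dagger,a$) and is independent of the angular momentum quantum number $m$. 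Using the basis $\varphi_{n,m}=|n\rangle\otimes|m\rangle$ from Sec.~III.C (so $\Pi_n=|n\rangle\langle n|\otimes\mathbbm 1$), this gives
\beq
\Pi_n\,e^{\im \mathbf q\cdot \mathbf r}\,\Pi_n \;=\; \langle n|e^{\im \mathbf q\cdot \widetilde{\mathbf R}}|n\rangle\,\cdot\,\Pi_n\,e^{\im \mathbf q\cdot \mathbf R}\,\Pi_n.
\eeq

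The key computation is therefore the scalar factor $\langle n|e^{\im \mathbf q\cdot \widetilde{\mathbf R}}|n\rangle$. Writing $q=q_x+\im q_y$ and using \eqref{aa}, one gets $\mathbf q\cdot\widetilde{\mathbf R}=\tfrac{\ell_B}{\sqrt2}(q a^\dagger+\bar q a)$. Since $[a^\dagger,a]=-1$ is central, the Baker--Campbell--Hausdorff formula yields
\beq
e^{\im\mathbf q\cdot\widetilde{\mathbf R}} \;=\; e^{\im\frac{\ell_B q}{\sqrt2}\,a^\dagger}\,e^{\im\frac{\ell_B \bar q}{\sqrt2}\,a}\;e^{-\ell_B^2|\mathbf q|^2/4}.
\eeq
Expanding both exponentials as power series, applying them to $|n\rangle$, and using orthogonality of number states yields a single-sum expression
\beq
\langle n|e^{\im\mathbf q\cdot\widetilde{\mathbf R}}|n\rangle
= e^{-\ell_B^2|\mathbf q|^2/4}\sum_{k=0}^n \binom{n}{k}\frac{\bigl(-\ell_B^2|\mathbf q|^2/2\bigr)^k}{k!}
= e^{-|\mathbf q|^2/8}\,L_n\!\left(\tfrac{|\mathbf q|^2}{4}\right),
\eeq
the last equality using the convention $B=2$ (so $\ell_B^2=1/2$) and the defining series \eqref{eq:Laguerre}.

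Finally one conjugates by $U_n$. Because $U_n$ acts only on the cyclotron factor $|n\rangle\mapsto|0\rangle$ and leaves the guiding-center sector untouched (cf.\ \eqref{Un1}), one has $U_n\Pi_n U_n^*=\Pi_0$ and $U_n e^{\im\mathbf q\cdot\mathbf R}U_n^*=e^{\im\mathbf q\cdot\mathbf R}$. Combining the three steps yields the first equality in \eqref{eq:proj wave}. The second equality is obtained by specializing the same factorization to $n=0$: since $L_0=1$, one gets $\Pi_0 e^{\im\mathbf q\cdot\mathbf r}\Pi_0=e^{-|\mathbf q|^2/8}\Pi_0 e^{\im\mathbf q\cdot\mathbf R}\Pi_0$, which reabsorbs the Gaussian prefactor.

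The only subtle step is the BCH manipulation together with the bookkeeping of the magnetic-length/$B$ conventions so that the Gaussian $e^{-|\mathbf q|^2/8}$ and the Laguerre argument $|\mathbf q|^2/4$ come out with matching constants; once ladder operators are introduced and the tensor product structure is invoked, the rest is a one-line matrix element calculation.
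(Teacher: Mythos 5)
Your proof is correct and follows essentially the same route as the paper: split $\mathbf r=\mathbf R+\widetilde{\mathbf R}$, factor the plane wave over the two commuting oscillators, use the tensor-product structure to reduce the cyclotron part to a scalar matrix element, and evaluate it by Baker--Campbell--Hausdorff to get $L_n\left(\tfrac{|\mathbf q|^2}{4}\right)e^{-|\mathbf q|^2/8}$. The one place you improve on the paper is the second equality in \eqref{eq:proj wave}, which you obtain simply by specializing the first identity to $n=0$ (using $L_0=1$), whereas the paper performs a second BCH manipulation to relate $e^{\im\mathbf q\cdot\mathbf R}$ and $e^{\im\mathbf q\cdot\mathbf r}$ on the LLL.
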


The equality of the left-hand and the right-hand sides of~\eqref{eq:proj wave} can be seen as a Fourier transformed version of~\eqref{eq:nLLdens} (with $k=1$). The identity \eqref{eq:proj wave} implies that the norm of $\Pi_0 e^{\im \mathbf{q}\cdot \mathbf{r}} \Pi_0$ decays faster than any polynomial in $|q|$. Indeed, on the left hand side we have a product of unitaries and projections whose norm is bounded by one. Also, when $q\neq 0$ and $n$ grows, the norm of $\Pi_n e^{\im \mathbf{q}\cdot \mathbf{r}} \Pi_n$ decays like $n^{-1/4}$.

We now provide another proof, using guiding center coordinates rather than ladder operators. This also connects with the middle expression in~\eqref{eq:proj wave}.

\begin{proof}
Some of the following computations can be found in a variety of sources, e.g.~\cite[Proof of Theorem~3.2]{Jain-07} or~\cite{GoeLed-06}.

First note that if $A$ is a function of $a^\dagger, a$ and $B$ of $b^\dagger, b$, then
\beq
\langle \varphi_{n',m'}|AB|\varphi_{n,m}\rangle=\langle \varphi_{n',0}|A|\varphi_{n,0}\rangle\, \langle \varphi_{0,m'}|B|\varphi_{0,m}\rangle.\label{factorization}
\eeq
This is a consequence of the fact that the two commuting harmonic oscillators \eqref{a} and \eqref{b} can be represented, in a unitarily equivalent way,  in the tensor product of two spaces with basis vectors $\varphi_{n,0}$ and $\varphi_{0,m}$ respectively. In this representation $\varphi_{n,m}=\varphi_{n,0}\otimes \varphi_{0,m}$ and the operators $A$ and $B$  act independently on each of the tensor factors. One can also pick directly $A,B$ to be polynomials in creation and annihilation operators and use the CCR to prove the claim.
Note, however, that in the representation \eqref{nLLbasis} the functions $\varphi_{n,m}(z,\bar z)$ are not simply products of the functions $\varphi_{n,0}$ and $\varphi_{0,m}$. Indeed, the  variables $z$ and $\bar z$, regarded as position operators, do not act independently in the tensor factors, cf. \eqref{splitting2}.

We apply \eqref{factorization} to compute the matrix elements of $\exp({\mathrm i}\mathbf q\cdot \mathbf r) $, $\mathbf q\in\mathbb R^2$, in the nLL. With 
\beq\mathbf q=(q_x,q_y), \quad q=q_x+\im q_y,  \quad \mathbf r=(x,y), \quad z=x+iy\eeq
and further employing
\beq z=a+b^\dagger, \qquad \bar z=a^\dagger+b\eeq
we have
\beq 
\mathbf q\cdot \mathbf r=q_x x+q_y y=\half(\bar q z+q\bar z)=\half (q a^\dagger+\bar q a)+\half(\bar q b^\dagger+q b). \label{bfr}
\eeq
Since the $a^\#$'s and the $b^\#$'s commute it follows that
\beq
\exp\left({\mathrm i}\mathbf q\cdot \mathbf r\right)=\exp \left(\frac {\mathrm i} 2 (q a^\dagger+\bar q a) \right)
\exp \left(\frac {\mathrm i} 2(\bar q b^\dagger+q b)\right)
\eeq
and thus by \eqref{factorization}
\beq
\langle \varphi_{n,m'}|\exp({\mathrm i}\mathbf q\cdot \mathbf r)|\varphi_{n,m}\rangle=
\langle \varphi_{n,0}|\exp \left(\frac {\mathrm i} 2 (q a^\dagger+\bar q a) \right)
|\varphi_{n,0}\rangle\langle \varphi_{0,m'}| \exp \left(\frac {\mathrm i} 2(\bar q b^\dagger+q b)\right)|\varphi_{0,m}\rangle.
\eeq
By the Baker-Campbell-Hausdorff formula 
\beq e^{X+Y}=e^X\,e^Y\,e^{-\half[X,Y]}\eeq
for two operators commutating with their commutator (recall that $[a, a^\dagger]=1$) we can write
\beq
\exp \left(\frac {\mathrm i} 2 (q a^\dagger+\bar q a) \right) = \exp \left( \frac {\mathrm i} 2 (q a^\dagger) \right) \exp\left( \frac {\mathrm i} 2 (\bar q a)\right) \exp\left(-\frac 1 8|q|^2\right)
\eeq
and thus
\begin{equation}
\langle \varphi_{n,m'}|\exp\left({\mathrm i}\mathbf q\cdot \mathbf r\right)|\varphi_{n,m}\rangle= \tilde h_n (q)
\langle \varphi_{0,m'}| \exp \left(\frac {\mathrm i} 2(\bar q b^\dagger+q b)\right)|\varphi_{0,m}\rangle.\label{expvalues}
\end{equation}
with 
\beq
\tilde h_n(q)=\langle \exp\left(\frac {-\mathrm i} 2 (\bar q a) \right)\varphi_{n,0}| \exp \left(\frac {\mathrm i} 2 (\bar q a)\right)\varphi_{n,0}\rangle \exp\left(-\frac 1 8 |q|^2\right).
\eeq
Expanding the exponential and using $a^k\,\varphi_{n,0}=\sqrt {(n-1)\cdots(n-k+1)}\, \varphi_{n-k,0}$ we obtain
\begin{equation}
\tilde h_n(q)=\sum_{k=0}^n \frac {(-1)^k}{4^{k}}{n\choose k}\frac {1}{k!} |q|^{2k} \exp\left(- \frac 1 8 |q|^2\right)=L_n(\mbox{$\frac 14$}|q|^2) \exp\left(- \frac 1 8 |q|^2\right).\label{8.9} 
\end{equation}
Thus, recalling the definition of the guiding center coordinate $\mathbf{R}$ in Sec.~\ref{sec:guiding},~\eqref{expvalues} implies the first equality in~\eqref{eq:proj wave}.

To obtain the second equality we  subtract
\beq\frac {\mathrm i} 2 (q a^\dagger+\bar q a)\eeq
from \eqref{bfr} to get, employing the Campbell-Hausdorff formula again,
\begin{multline}
 \exp\left({\mathrm i}\mathbf q\cdot \mathbf R\right)
 =\exp\left(-\frac {\mathrm i} 2 q a^\dagger+ \left(
 \frac {\mathrm i} 2 (q a^\dagger+\bar q a)
+\frac {\mathrm i} 2 (\bar q b^\dagger+q b)\right)- \frac {\mathrm i} 2 \bar q a\right)\\
=\exp\left(-\frac {\mathrm i} 2 q a^\dagger\right)\exp\left(\mathrm i\mathbf q\cdot\mathrm r\right)\exp\left(-\frac {\mathrm i} 2 \bar q a\right)\exp\left(\frac 1 8|q|^2\right).
\end{multline}
On the LLL $\exp\left(-\frac {\mathrm i} 2 \bar q a\right)$ is the identity,  so the second equality in~\eqref{eq:proj wave}  follows.
\end{proof}

To deduce Equations~\eqref{eq:eff int}--\eqref{eq:eff pot} and hence Theorem 2.1 it only remains to write the Fourier decompositions 
\beq V (\rv) = \int_{\R^2} \widehat{V} (\qv) e^{\im \qv \cdot \rv} d\qv\eeq
and 
\beq w(\rv_1 - \rv_2) = \int_{\R^2} \widehat{w} (\qv) e^{\im \qv \cdot \rv_1} e^{-\im \qv \cdot \rv_2} d\qv \eeq
and use Lemma~\ref{lem:plane}. The expressions involving Laplacians in Theorem~\ref{thm:main} follow from the Fourier representation $-\Delta = |\qv|^2.$ This argument demands absolute integrability of the Fourier transforms, but the general case follows by a density argument.

\section{Laughlin states in higher Landau levels}\label{sec:Laughlin}

As already mentioned, a crucial approximation in FQH physics is to truncate the Haldane 
pseudo-potential series in the LLL Hamiltonian~\eqref{eq:LLL hamil} to obtain the Laughlin state~\eqref{eq:Laughlin} as an exact ground state of the translation invariant problem $V\equiv 0$. 

In view of Theorem~\ref{thm:main}, it is desirable to do the same in a higher Landau level, at the level of the effective Hamiltonian~$H_{0,w_n} ^{\LLL}$ and~\eqref{eq:Laughlin} then becomes an exact ground state after the unitary mapping to the LLL. In this section, we explain how the previous considerations allow to study the properties of the corresponding physical wave-function (that is, as expressed in the position coordinates, rather than in the guiding center coordinates). 

\subsection{Density estimates on mesoscopic scales}

Consider a Laughlin state in the $\LLL_N$
\beq
\Psi_{0,N}^{\rm Lau}=c_N\prod_{i<j}(b^\dagger_i-b^\dagger_j)^\ell \varphi_{0,0}^{\otimes N}\label{Laugh01}
\eeq
with wave function
\beq
\Psi_{0,N}^{\rm Lau}(z_1,\dots,z_N)=c_N\prod_{i<j}(z_i-z_j)^\ell e^{-(|z_1|^2+\cdots+|z_N|^2)/2}.\label{Laugh02}
\eeq
Here $\ell=1, 3,\dots$ for fermions and $\ell=2,4,\dots$ for bosons. We denote by  
\begin{equation}\label{eq:Laugh dens}
\varrho^{\rm Lau}_{0,N}(\rv):= N \int_{\R ^{2(N-1)}} |\Psi_{0,N} (\rv,\rv_2,\ldots,\rv_N)|^2 d\rv_2 \ldots d\rv_N
\end{equation}
the corresponding $1$-particle density. According to Laughlin's plasma analogy~\cite{Laughlin-83} the density profile is for large $N$ well approximated by a droplet of radius $(\ell N)^{1/2}$ and fixed density $(\pi\ell)^{-1}$,
\beq
\varrho^{\rm flat}_N(\rv) := \begin{cases}\frac{1}{\pi \ell} \mbox{ if } |z|\leq \sqrt {\ell N}\\
                               0 \mbox{ otherwise}.
                             \end{cases}
\eeq
Indeed, by a rigorous mean-field analysis   it was proved in \cite{RouSerYng-13b} that this approximation holds in the sense of averages over discs of radius $N^\alpha$ with $1/2>\alpha>1/4$. More generally, the $k$-particle densities are well approximated in this sense by the $k$-fold tensor power of the flat density if $N\to\infty$.
The more refined analysis of classical Coulomb systems in \cite{LebSer-16,BauBouNikYau-15, Ser-20} leads to an extension of this result down to mesoscopic scales $N^\alpha$ for all $\alpha>0$. We shall now use results from \cite{LebSer-16} to estimate the density of Laughlin states in higher Landau levels.

The Laughlin state corresponding to~\eqref{Laugh01} in the $n$-th Landau level $\nLL_N$ is
\begin{equation}
\Psi_{n,N}^{\rm Lau}=c_N\prod_{i<j}(b^\dagger_i-b^\dagger_j)^\nu \varphi_{n,0}^{\otimes N}=
c_N\prod_{i<j}(b^\dagger_i-b^\dagger_j)^\nu \prod_{i=1}^N\left[ (a^\dagger_i)^n\,\varphi_{0,0}\right]
\label{Laughn1}
\end{equation}
with wave function (cf Lemma~\ref{lem:simple})
\begin{equation}
\Psi_{n,N}^{\rm Lau}(\rv_1,\dots,\rv_N) = c_N\left[ \frac 1{(n!)^{N/2}}\prod_{i=1}^N\left(\bar z_i-\partial_{z_i}\right)^n\prod_{i<j}(z_i-z_j)^\nu\right] e^{-(|z_1|^2+\cdots+|z_N|^2)/2}.\label{Laughn2}
\end{equation}
This is, in \emph{electronic position variables}, the exact ground state of a Hamiltonian obtained by
\begin{itemize}
 \item Projecting the physical starting point~\eqref{eq:full hamil} in the $\nLL_N$.
 \item Unitarily mapping the result down to an effective Hamiltonian on $\LLL_N$ using Theorem~\ref{thm:main}.
 \item Neglecting the one-body potential $V_n$ and truncating the Haldane pseudo-potential series of the interaction potential $w_n$. 
\end{itemize}
The Hamiltonian obtained this way acts on $\LLL_N$, and its exact ground state is a LLL Laughlin state in \emph{guiding center variables}. Lifting it back up to the $\nLL_N$ results in~\eqref{Laughn2}:  
\beq\Psi_{n,N}^{\rm Lau} = \left( U_n^* \right) ^{\otimes N}  \Psi_{0,N}^{\rm Lau}.\eeq
We now vindicate a natural expectation: the density of $\Psi_{n,N} ^{\rm Lau}$ is very close, for large $N$, to that of $\Psi_{0,N}^{\rm Lau}$ on length scales much larger than the magnetic length ($1$ in our units). This is because electron coordinates and guiding center coordinates differ only on the scale of a cyclotron orbit, which is much smaller than the thermodynamically large extent of the states themselves.  

We shall test the densities with regularized characteristic functions of discs. Let $\chi_1$ be the characteristic function of the unit disc around the origin and for $\eps>0$ let $\eta_\eps$ be a function with support in the annulus $1\leq |\mathbf r|\leq 1+\eps$ such that
 $\chi_{1,\eps}:=\chi_{1}+\eta_\eps$ is $C^\infty$.
 For $R>0$ and $\mathbf r_0\in \mathbb R^2$ define 
 \beq\chi_{R,{\mathbf r}_0,\eps}(\mathbf r)=\chi_{1,\eps}(R^{-1}(\mathbf r-\mathbf r_0))\label{chiR}.\eeq
The analysis in \cite{RouSerYng-13b,LebSer-16} is carried out using scaled variables, 
\beq\mathbf r'=N^{-1/2}\mathbf r.\eeq
In these variables the extension of the Laughlin state  is $O(1)$ and mesoscopic scales are $O(N^{-\gamma})$ with $0<\gamma<1/2$.  The scaled densities are ($\varrho_{n,N}^{\rm Lau}$ is defined in analogy with~\eqref{eq:Laugh dens})
\begin{equation}
 \widetilde{\varrho}_{n,N}^{\rm Lau}(\mathbf r') = \varrho_{n,N}^{\rm Lau}(N^{1/2}\mathbf r') \mbox{ and }  \widetilde{\varrho}^{\rm flat}_N(\mathbf r')=\varrho^{\rm flat}_N(N^{1/2}\mathbf r'). 
\end{equation}
We scale the test functions accordingly and consider $\chi_{r,{\mathbf r'_0},\eps}(\mathbf r')=\chi_{1,\eps}(r^{-1}(\mathbf r'-\mathbf r'_0))$. The result on the density and its fluctuations we want to sketch the proof of is as follows:

\begin{theorem}[\textbf{Density of Laughlin states on mesoscopic scales}]\label{thm:density}\mbox{}\\
\emph{\textbf{(i)}} For every Landau index $n$, every fixed $\eps>0$  and all mesoscopic scales $r\sim N^{-\gamma}$ with $0<\gamma<\half$
\begin{equation}
 \int \widetilde{\varrho}^{\rm Lau}_{N,n}(\mathbf r') \chi_{r,\eps,z_0}(\mathbf r')\,{\mathrm d}^2\mathbf r'=\int\widetilde{\varrho}^{\rm flat}_N(\mathbf r') \chi_{r,\eps,z_0}(\mathbf r')\, {\mathrm d}^2\mathbf r' \,  (1+O(N^{-1+2\gamma }))\label{128} 
\end{equation}

\noindent\emph{\textbf{(ii)}} If $r\sim N^{-\gamma}$,  the fluctuation of the linear statistics associated to $\chi_{r,\eps,z_0}$  in the $n$-th Landau level is 
\beq \sim \eps^{-1/2}(1+\eps^{-2n} O(N^{-n(1-2\gamma)})).\label{129}\eeq 
\end{theorem}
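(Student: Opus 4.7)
The plan is to lift the LLL mesoscopic density and fluctuation estimates of~\cite{RouSerYng-13b,LebSer-16} up to the $n$-th Landau level through Theorem~\ref{thm:main2}, exploiting that $\Psi^{\rm Lau}_{n,N}=(U_n^*)^{\otimes N}\Psi^{\rm Lau}_{0,N}$ by construction. Taking $k=1$ in Theorem~\ref{thm:main2} gives $\varrho^{\rm Lau}_{n,N}=L_n(-\tfrac14\Delta)\varrho^{\rm Lau}_{0,N}$, and after the rescaling $\mathbf r'=N^{-1/2}\mathbf r$ one obtains $\widetilde\varrho^{\rm Lau}_{N,n}=L_n(-\Delta'/(4N))\widetilde\varrho^{\rm Lau}_{N,0}$, the $1/N$ reflecting the quadratic scaling of the Laplacian. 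For part~(i), I pair this identity with $\chi_{r,\mathbf r'_0,\eps}$ and integrate by parts to shift all Laplacians onto the test function,
\beq
\int\widetilde\varrho^{\rm Lau}_{N,n}\,\chi\,\mathrm d^2\mathbf r'=\sum_{l=0}^{n}\binom{n}{l}\frac{1}{l!(4N)^l}\int\widetilde\varrho^{\rm Lau}_{N,0}\,(\Delta')^l\chi\,\mathrm d^2\mathbf r'.
\eeq
The $l=0$ term is handled by the LLL mesoscopic density estimate of~\cite{RouSerYng-13b,LebSer-16}, giving $\int\widetilde\varrho^{\rm flat}_N\chi\,(1+O(N^{-1+2\gamma}))$. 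For $l\ge 1$, the key observation is that $(\Delta')^l\chi$ is supported on the smooth-cutoff annulus of radius~$r$ and width~$r\eps$ and satisfies $\|(\Delta')^l\chi\|_\infty\lesssim(r\eps)^{-2l}$; together with the uniform in-bulk bound $\|\widetilde\varrho^{\rm Lau}_{N,0}\|_\infty=O(1)$, each such correction is at most $N^{-l}(r\eps)^{-2l}\cdot r^2\eps$. Dividing by the main term $\sim r^2$ and using $r\sim N^{-\gamma}$, the dominant $l=1$ relative error is $O(N^{-1+2\gamma}\eps^{-1})$, absorbed in the $O(\cdot)$ at fixed~$\eps$, hence~\eqref{128}.

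For part~(ii), I would decompose the variance of $X=\sum_i\chi(\mathbf r'_i)$ as
\beq
\mathrm{Var}(X)=\int\chi^2\,\widetilde\varrho^{\rm Lau}_{N,n}\,\mathrm d^2\mathbf r'+\iint\chi\otimes\chi\,\bigl[\rho^{(2)}_{N,n}-\widetilde\varrho^{\rm Lau}_{N,n}\otimes\widetilde\varrho^{\rm Lau}_{N,n}\bigr],
\eeq
apply Theorem~\ref{thm:main2} with $k=1,2$, and integrate by parts in each slot so that the whole expression becomes a sum of LLL correlation integrals tested against modified functions obtained by applying $L_n(-\Delta'/(4N))$ to $\chi$ and $\chi^2$ in the appropriate positions. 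The LLL mesoscopic CLT of~\cite{LebSer-16} controls the LLL variance by $\|\nabla\chi\|_{L^2}^2\sim\eps^{-1}$, giving the leading fluctuation $\sim\eps^{-1/2}$. The Laguerre-correction contributions, estimated as in~(i) using the derivative bounds and support area of $\chi$, are dominated by the $l=n$ term, which contributes to the variance an amount of order $N^{-2n}(r\eps)^{-(4n+2)}\cdot r^2\eps=r^{-4n}\eps^{-4n-1}N^{-2n}$; with $r\sim N^{-\gamma}$ this yields a relative fluctuation correction of order $\eps^{-2n}N^{-n(1-2\gamma)}$, matching~\eqref{129}.

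The main obstacle is the quantitative pointwise and regularity control of the LLL Laughlin density and two-point function at the mesoscopic scale $N^{-\gamma}$: after integration by parts the effective test functions $L_n(-\Delta'/(4N))\chi$ are no longer sign-definite, so coercivity or monotonicity arguments from the plasma analogy are not usable, and one must invoke directly the quantitative density and correlation decay bounds of~\cite{LebSer-16}. A secondary technical point is to check that the diagonal cross-contributions in the two-point variance computation of~(ii) remain of lower order than $\eps^{-1}$; this is automatic, since they reduce to one-point quantities of order $O(r^2)\ll\eps^{-1}$ on the scales considered.
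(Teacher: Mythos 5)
Your proposal is correct in substance and rests on the same two pillars as the paper's argument: the transfer identity of Theorem~\ref{thm:main2}, which after rescaling gives $\widetilde{\varrho}^{\rm Lau}_{N,n}=L_n\bigl(-\tfrac{1}{4}N^{-1}\Delta\bigr)\widetilde{\varrho}^{\rm Lau}_{N,0}$, and the mesoscopic results of~\cite{LebSer-16} for the LLL/Coulomb-gas side. Where you differ is in the bookkeeping of the Laguerre corrections. The paper simply feeds the \emph{entire} modified test function $\xi=L_n\bigl(-\tfrac{1}{4}N^{-1}\Delta\bigr)\chi_{r,\eps,\mathbf r_0'}$ into Theorem~1 and Remark~1.2 of~\cite{LebSer-16}: Equation~(1.17) there gives~\eqref{128} directly (note $\mathrm{Mean}(\xi)=0$ for the quadratic potential), and the ``mesoscopic case'' of the CLT gives the fluctuation through $\Vert\nabla\xi\Vert_2$, so the $l\geq 1$ terms are never estimated separately against the density. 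You instead split off $l=0$ and bound the $l\geq1$ terms by hand; this works, but your input $\Vert\widetilde{\varrho}^{\rm Lau}_{N,0}\Vert_\infty=O(1)$ is not something~\cite{LebSer-16} provides --- you would need the local density bounds of~\cite{BauBouNikYau-15} (or, more economically, you could test $(\Delta')^l\chi$ itself against~\cite{LebSer-16}, whose leading term $\int\widetilde{\varrho}^{\rm flat}_N(\Delta')^l\chi$ vanishes for $l\geq1$ in the bulk, and keep only the error term). For part (ii) your detour through the truncated two-point function is heavier than the paper's one-line appeal to the CLT for $\xi$, but it has the merit of making explicit the diagonal discrepancy $L_n(\chi^2)\neq(L_n\chi)^2$, which you correctly dismiss as an $O(r^2)$ one-point quantity; the paper absorbs this silently. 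One shared imprecision: for fixed $\eps$ and $N\to\infty$ the corrections in $\Vert\nabla\xi\Vert_2$ are actually dominated by the $l=1$ term $\sim\eps^{-2}N^{-(1-2\gamma)}$ rather than the $l=n$ term, so the form of~\eqref{129} that both you and the paper write down records the highest-order, not the largest, correction.
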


\begin{proof} 
The considerations of Sec.~\ref{densities} imply that for every test function $\chi$
\beq \int_{\R^2} \widetilde{\varrho}^{\rm Lau}_{N,n}(\mathbf r') \chi (\rv') = \int_{\R^2} \widetilde{\varrho}^{\rm Lau}_{N,0}(\mathbf r') L_n \left(-\mbox{$\frac 14$} N^{-1} \Delta\right)  \chi  (\rv')\eeq
with $L_n$ the Laguerre polynomial. The point is that for large $N$, only the lowest order term in the above polynomial will contribute: 
\beq L_n \left(-\mbox{$\frac 14$} N^{-1} \Delta\right) \approx 1.\eeq
We use Theorem 1 and Remark 1.2 in \cite{LebSer-16}, see also Theorem 1 in \cite{Ser-20}.  The function  denoted there by $\xi$ is in the present case 
\beq
L_n \left(-\mbox{$\frac 14$} N^{-1} \Delta\right) \chi_{r,\eps,\mathbf r_0'}.
\eeq
The potential in Equation (1.14) of~\cite{LebSer-16} is here $|z|^2$ so Mean$(\xi)=0$.  Equation (1.17) in  \cite{LebSer-16} and 
\beq
\int\widetilde\varrho^{\rm flat}_N \chi_{r,\eps,z_0}\sim r^2
\eeq 
now lead directly to \eqref{128} above. The dependence of the error term on $n$ and $\eps$ cannot be deduced from Equation (1.17) in Remark 1.2 alone, however.

For the fluctuations we need $\Vert \nabla \xi\Vert_2$, according to the \lq\lq Mesoscopic case" of Theorem~1 in  \cite{LebSer-16}. Since 
\beq
\xi= L_n \left(-\mbox{$\frac 14$} N^{-1} \Delta\right) \chi_{r,\eps,\mathbf r_0'},
\eeq 
Equation~\eqref{129} is a consequence of this theorem and of a simple $L^2$-estimate of the gradient of the test function $\chi_{r,\eps,\mathbf r_0'}$.
\end{proof}

\subsection{Rigidity estimates}

In~\cite{RouYng-14,RouYng-15,LieRouYng-16,LieRouYng-17,RouYng-17,OlgRou-19} we have investigated rigidity/stability properties of the LLL Laughlin state. The question is now the response of the Laughlin function to a slight relaxation of the assumptions made in it's derivation, namely that one could in first approximation neglect the one-body potential and truncate the Haldane pseudo-potential series to a finite order. If one assumes the validity of a certain ``spectral gap conjecture'' (see~\cite[Appendix]{Rougerie-xedp19} and references therein), investigating this question basically means minimizing the one-body energy and the residual part of the interaction \emph{within the full ground eigenspace} of the truncated interaction energy (cf degenerate perturbation theory). Our main conclusion was that this problem could be solved  to leading order in the  large $N$ limit by generating quasi-holes on top of Laughlin's wave function. We now want to quickly explain how this can be generalized to Laughlin states in higher levels. We discuss only the adaptation of~\cite{LieRouYng-17,RouYng-17} for the response to one-body potential. One could consider as well the response to smooth long-range weak interactions as in~\cite{OlgRou-19}, but for brevity we do not write this explicitly. 
 
We take $v:\R^2 \rightarrow \R^+$ to be a smooth one-body potential, growing polynomially at infinity. We scale it so that it lives on the scale of the Laughlin wave-function: 
\beq V_N (\rv) = v (\sqrt{N} \rv).\eeq
As discussed in the aforementioned references these assumptions can be relaxed to some extent. The main observation is that after the reduction of the $\nLL$ interacting Hamiltonian discussed in Subsection VI A, any multiplication of the $\LLL$ Laughlin state by a symmetric analytic function $F$ still yields an exact zero-energy eigenstate in guiding center variables. It is thus relevant to consider the action of the one-body potential $V_N$ on the ground-state space of the truncated interaction Hamiltonian. In electron variables the latter is 
\begin{equation}\label{eq:GS space}
\cL_{N,n} ^\ell := \left\{ \Psi_{N,n} \in \nLL_N, U_n ^{\otimes N} \Psi_{N,n} = F(z_1,\ldots,z_N) \PsiLau ^{(\ell)} \mbox{ with } F \mbox{ analytic and symmetric }\right\} 
\end{equation}
where the $\LLL$ Laughlin state is as in~\eqref{eq:Laughlin}. For any many-body wave-function $\Psi_{N,n}\in \cL_{N,n}^{\ell}$ we define it's one-particle density as 
\beq \varrho_{\Psi_{N,n}} (\rv) := N \int_{\R^{2(N-1)}} |\Psi_{N,n} (\rv,\rv_2,\ldots,\rv_N)|^2 d\rv_2\ldots d\rv_N.\eeq
The variational problem for the response of the Laughlin state to an external potential, within the class~\eqref{eq:GS space} is now 
\begin{equation}\label{eq:full var prob}
E (N,n,\ell) := \inf \left\{ \int_{R^2} V \varrho_{\Psi_{N,n}}, \, \Psi_{N,n} \in \cL_{N,n} ^{\ell}, \int_{\R^{2N}} |\Psi_{N,n}|^2 = 1  \right\}. 
\end{equation}
It is of importance in Laughlin's theory of the FQHE that one needs only consider so-called quasi-holes states to solve the  above approximately. If one makes this approximation, the minimum energy becomes 
\begin{equation}\label{eq:red var prob}
e (N,n,\ell) := \inf \left\{ \int_{R^2} V \varrho_{\Psi_{N,n}},\, U_n ^{\otimes N} \Psi_{N,n} = f^{\otimes N} \PsiLau ^{(\ell)} \mbox{ with } f \mbox{ analytic }, \int_{\R^{2N}} |\Psi_{N,n}|^2 = 1  \right\}. 
\end{equation}
The latter energy is obtained by reducing the variational set, so, obviously 
\beq E(N,n,\ell) \leq e (N,n,\ell).\eeq
What is much less obvious is that this upper bound is optimal in the large $N$ limit: 

\begin{theorem}[\textbf{Response of higher LL Laughlin states to external potentials}]\label{thm:rigidity}\mbox{}\\
With the previous notation we have, for any fixed $n,\ell \in \N$
\begin{equation}\label{eq:rigidity}
\frac{E(N,n,\ell)}{e (N,n,\ell)} \underset{N \to \infty}{\rightarrow} 1 
\end{equation}
\end{theorem}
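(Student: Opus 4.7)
The plan is to reduce the statement to the $\LLL$ rigidity theorem of~\cite{LieRouYng-17,RouYng-17,OlgRou-19}. By Theorem~\ref{thm:main2} with $k=1$, for any $\Psi_{N,n}\in\nLL_N$ the one-particle density is related to that of $\Psi_{N,0}:=\mathcal U_{N,n}\Psi_{N,n}\in\LLL_N$ through $\varrho_{\Psi_{N,n}}=L_n(-\tfrac14\Delta)\varrho_{\Psi_{N,0}}$. Since $\LLL$ densities are Schwartz (Section~V~A) and $V_N$ is smooth of polynomial growth, integration by parts yields
\beq
\int_{\R^2} V_N\,\varrho_{\Psi_{N,n}} = \int_{\R^2} V_N^{(n)}\,\varrho_{\Psi_{N,0}},\qquad V_N^{(n)}:=L_n\!\left(-\tfrac14\Delta\right) V_N.
\eeq
By the definition~\eqref{eq:GS space}, $\mathcal U_{N,n}$ establishes a bijection between $\cL_{N,n}^{\ell}$ and $\{F\,\PsiLau^{(\ell)}:F\text{ symmetric analytic}\}\subset\LLL_N$, sending the quasi-hole subset to $\{f^{\otimes N}\,\PsiLau^{(\ell)}\}$. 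Consequently both $E(N,n,\ell)$ and $e(N,n,\ell)$ are realised as $\LLL$ infima of $\int V_N^{(n)}\,\varrho_{\,\cdot\,}$, and~\eqref{eq:rigidity} becomes equivalent to the analogous ratio statement in the $\LLL$ with the effective potential $V_N^{(n)}$.

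From~\eqref{eq:Laguerre pre} and the scaling $V_N(\rv)=v(\sqrt N\rv)$,
\beq
V_N^{(n)}(\rv)=v(\sqrt N\rv)+\sum_{k=1}^n \binom{n}{k}\frac{(-1)^k}{4^k\,k!}\,N^{k}\,(\Delta^k v)(\sqrt N\rv),
\eeq
so $V_N^{(n)}$ remains smooth of polynomial growth. The $k\geq 1$ summands, though large pointwise because of the $N^k$ pre-factor, largely cancel in the bulk of the Laughlin droplet where the density is approximately the flat profile $\varrho_N^{\rm flat}$: integrating $N^k(\Delta^k v)(\sqrt N\rv)$ against an almost constant density produces, after repeated integration by parts, only boundary contributions localised near the edge of the droplet. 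This is precisely the regime addressed in~\cite{LieRouYng-17,RouYng-17,OlgRou-19}, where the response to an external perturbation is controlled through a fine analysis of the Laughlin density near the edge (together with the mesoscopic-scale estimates recalled in Theorem~\ref{thm:density} of the present paper).

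The plan is therefore to feed $V_N^{(n)}$ into the existing $\LLL$ rigidity machinery and verify that the conclusion ``$F$-infimum / $f^{\otimes N}$-infimum $\to 1$'' carries through. The main obstacle is to check that the quantitative $\LLL$ bounds are uniform over the finite sum defining $V_N^{(n)}$ and tolerate the $N^k$ pre-factors; equivalently, one must ensure that each Laguerre correction term contributes to $E(N,n,\ell)$ and to $e(N,n,\ell)$ at the same leading order, so that their ratio is unaffected to leading order. Since $n$ and $\ell$ are fixed, and each correction is built from derivatives of the single smooth function $v$, this amounts to a controlled extension of the arguments of~\cite{LieRouYng-17,RouYng-17,OlgRou-19} — conceptually routine, but technically where the real work of the proof lies.
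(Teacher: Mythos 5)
Your reduction is exactly the one the paper intends: use $\varrho_{\Psi_{N,n}}=L_n(-\tfrac14\Delta)\varrho_{\Psi_{N,0}}$ (Theorem~\ref{thm:main2} with $k=1$), integrate by parts to move the Laguerre operator onto the potential, note that $\mathcal U_{N,n}$ identifies $\cL_{N,n}^{\ell}$ (and its quasi-hole subset) with the corresponding $\LLL$ classes by definition, and then invoke the $n=0$ rigidity machinery of~\cite{LieRouYng-17,RouYng-17} with the effective potential $L_n(-\tfrac14\Delta)V_N$. The paper's own ``proof'' is nothing more than this two-sentence deferral, so at the level of strategy you and the authors agree.

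There is, however, a genuine problem in how you handle the Laguerre correction terms, and it matters because it is precisely the point that makes the adaptation routine rather than delicate. You take the paper's formula $V_N(\rv)=v(\sqrt N\rv)$ at face value and conclude that the $k$-th correction carries a factor $N^{k}$ (also, the sign should be $+\binom{n}{k}\tfrac{1}{4^kk!}\Delta^k$, since the argument of $L_n$ is $-\tfrac14\Delta$). A potential of the form $v(\sqrt N\rv)$ in the units of Sec.~VI varies on the scale $N^{-1/2}$, far \emph{below} the magnetic length, which is not ``the scale of the Laughlin wave-function''; the intended scaling, consistent with~\cite{RouYng-17} and with the paper's own proof of Theorem~\ref{thm:density} (where the Laguerre operator appears as $L_n(-\tfrac14 N^{-1}\Delta)$ in droplet-scale variables), is that $V_N$ varies on the droplet scale $\sqrt N$, so that $\Delta^k V_N$ carries a factor $N^{-k}$. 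Then $\sup|\Delta^k V_N|\lesssim N^{-k}$ on the relevant region while $\int V_N\varrho\sim N$, so the corrections are $O(N^{-1})$ \emph{relative to the main term, uniformly over every normalized state in the variational class}, and both $E(N,n,\ell)$ and $e(N,n,\ell)$ equal their $n=0$ counterparts up to a relative $O(N^{-1})$, which gives~\eqref{eq:rigidity} directly from the $n=0$ theorem. Your substitute mechanism --- that the $N^k$-large corrections ``largely cancel in the bulk where the density is approximately flat'' --- cannot do this job: the infima in~\eqref{eq:full var prob} and~\eqref{eq:red var prob} run over all states $F\,\PsiLau^{(\ell)}$, whose densities are in general \emph{not} flat (deforming the density is the whole point of the competitors), so a cancellation argument valid only for the unperturbed Laughlin density does not control the variational problems. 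Had the $N^{k}$ prefactors been real, the reduction to the LLL result would genuinely break down; recognizing that they are in fact $N^{-k}$ is the missing step.
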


The $n=0$ version of the above was proved in~\cite{LieRouYng-17,RouYng-17}. The adaptation to higher $n$ follows from the tools therein, together with the representation of $U_n V_n U_n^* $ discussed at length in Sec.~\ref{sec:proof thm}. We do not give details for brevity. We however point out that consequences for minimizing densities also follow, so that the density of a (quasi)-minimizer for~\eqref{eq:full var prob} is approximately flat with value $(\pi \ell) ^{-1}$ on an open set to be optimized over, and quickly drops to $0$ outside. This is in accordance with the physical picture of the system responding to external potentials by generating quasi-holes to accommodate their crests. Indeed, the interpretation of the states in~\eqref{eq:red var prob} is that the zeroes of the analytic function $f$ correspond to the location of quasi-holes in guiding center coordinates.
 

  \bigskip 
  
\noindent\textbf{Acknowledgements.} We had helpful conversations regarding the material of this paper with Thierry Champel, S\o{}ren Fournais and Alessandro Olgiati. We received funding from the European Research Council (ERC) under the European Union's Horizon 2020 Research and Innovation Programme (Grant agreement CORFRONMAT No 758620).

\bibliographystyle{siam}

\end{document}